\renewcommand\footnotetextcopyrightpermission[1]{} 
\def\fullversion
\useunder{\uline}{\ul}{}
\newcommand{\oref}[1]{\hyperref[#1]{#1}}
\newcommand{\ifconference}[1]{{{\ifx\fullversion\undefined{#1}\fi}\xspace}}
\newcommand{\iffullversion}[1]{{{\ifx\conference\undefined{#1}\fi}\xspace}}
\newcommand{\hide}[1]{} 
\setlist{topsep=0.3em,itemsep=0.2em,parsep=0.1em,leftmargin=*}
\newcolumntype{L}[1]{>{\raggedright\let\newline\\\arraybackslash\hspace{0pt}}m{#1}}
\newcolumntype{C}[1]{>{\centering\let\newline\\\arraybackslash\hspace{0pt}}m{#1}}
\newcolumntype{R}[1]{>{\raggedleft\let\newline\\\arraybackslash\hspace{0pt}}m{#1}}
\newcolumntype{B}{>{\bf}c}
\titleformat{\subsection}{\normalfont\large\bfseries}{\thesubsection}{1em}{}
\titlespacing{\section}{0pt}{0.3em}{0.2em} 
\titlespacing{\subsection}{0pt}{0.3em}{0.2em} 
\titlespacing{\subsubsection}{0pt}{0.1em}{1em} 
\newcommand{\mysubsubsection}[1]{\underline{\thesubsubsection~~~#1.}}
\titleformat{\subsubsection}[runin]
{\normalfont\normalsize\bfseries}{}{0em}{\mysubsubsection}
\newcommand{\myparagraph}[1]{\vspace{.03in}\noindent {\boldmath\bf #1\unboldmath}}
\newcommand{\rmv}[1]{\textcolor{red}{\sout{}}}
\newcommand{\nosemic}{\renewcommand{\@endalgocfline}{\relax}}
\newcommand{\dosemic}{\renewcommand{\@endalgocfline}{\algocf@endline}}
\definecolor{dpcol}{RGB}{0,160,240}
\newcommand{\yan}[1]{{\color{violet}{\bf Yan:} #1}}
\newcommand{\ziyang}[1]{{\color{cyan}{\bf Ziyang:} #1}}
\definecolor{framelinecolor}{RGB}{68,114,196}
\newdimen\zzsize
\newdimen\kwsize
\newcommand{\basicstyle}{\fontsize{\zzsize}{1\zzsize}\ttfamily}
\newcommand{\keywordstyle}{\fontsize{\kwsize}{1\kwsize}\ttfamily\bf}
\newdimen\zzlstwidth
\crefname{appendix}{Appendix}{Appendix}
\crefname{theorem}{Thm.}{Thm.}
\crefname{lemma}{Lem.}{Lem.}
\crefname{corollary}{Col.}{Col.}
\crefname{table}{Tab.}{Tab.}
\crefname{algorithm}{Alg.}{Alg.}
\crefname{figure}{Fig.}{Fig.}
\crefname{fact}{Fact}{Fact}
\Crefname{table}{Tab.}{Tab.}
\crefname{problem}{Problem}{Problem}
\newcommand*{\rom}[1]{\expandafter\@slowromancap\romannumeral #1@}
\newcommand{\treename}[1]{\textsf{#1}}
\newcommand{\cpamtree}{\treename{SPaC-tree}}
\newcommand{\spactree}{\treename{SPaC-tree}}
\newcommand{\hilberttree}{\treename{SPaC-H-tree}}
\newcommand{\mortontree}{\treename{SPaC-Z-tree}}
\newcommand{\porth}{\treename{P-Orth tree}}
\newcommand{\pspt}{$\Psi$-Lib}
\newcommand{\ourlib}{\pspt}
\newcommand{\pactree}{\treename{PaC-tree}}
\newcommand{\pkdtree}{\treename{Pkd-tree}}
\newcommand{\kdtree}{$k$d-tree\xspace}
\newcommand{\orth}{Orth-tree\xspace}
\newcommand{\orthtree}{\orth}
\newcommand{\quadtree}{Quad-tree\xspace}
\newcommand{\octtree}{Oct-tree\xspace}
\newcommand{\ptree}{P-tree\xspace}
\newcommand{\btree}{\textsf{B-tree}\xspace}
\newcommand{\ourorth}{\porth\xspace}
\newcommand{\ourcpam}{\cpamtree\xspace}
\newcommand{\ourcpamh}{\hilberttree\xspace}
\newcommand{\ourcpamz}{\mortontree\xspace}
\newcommand{\ids}{\textsf{InD}\xspace}
\newcommand{\ods}{\textsf{OOD}\xspace}
\newcommand{\pkd}{\pkdtree\xspace}
\newcommand{\zdtree}{\treename{Zd-tree}\xspace}
\newcommand{\osm}{\texttt{OSM}\xspace}
\newcommand{\geolife}{\texttt{GL}\xspace}
\newcommand{\cosmo}{\texttt{CM}\xspace}
\newcommand{\bhltree}{\mbox{\textsf{BHL-tree}}\xspace}
\newcommand{\logtree}{\mbox{\textsf{Log-tree}}\xspace}
\newcommand{\rtree}{\textsf{R-tree}\xspace}
\newcommand{\boost}{\textsf{Boost}\xspace}
\newcommand{\cpam}{\textsf{CPAM}\xspace}
\newcommand{\cpamh}{\textsf{CPAM-H}\xspace}
\newcommand{\cpamz}{\textsf{CPAM-Z}\xspace}
\newcommand{\zd}{\textsf{Zd-tree}\xspace}
\newcommand{\zcurve}{Morton-curve\xspace}
\newcommand{\hcurve}{Hilbert-curve\xspace}
\newcommand{\balpara}{\alpha}
\newcommand{\varden}{\texttt{Varden}\xspace}
\newcommand{\sweep}{\texttt{Sweepline}\xspace}
\newcommand{\uniform}{\texttt{Uniform}\xspace}
\newcommand{\orthBuild}{\textsc{BuildOrth}}
\newcommand{\sortSfc}{\textsc{HybridSort}}
\newcommand{\cpamBuild}{\textsc{BuildSPaCTree}}
\newcommand{\cpamBuildFromSorted}{\textsc{BuildSorted}}
\newcommand{\orthBatchInsert}{\textsc{BatchInsertOrth}}
\newcommand{\cpamBatchInsert}{\textsc{PtreeBatchInsert}}
\newcommand{\cpamBatchInsertFromSorted}{\textsc{InsertSorted}}
\newcommand{\leafwrap}{\phi}
\newcommand{\skheight}{\lambda}
\newcommand{\knn}{$k$-NN}
\newcommand{\os}{\sigma\xspace}
\newcommand{\emp}[1]{\emph{\textbf{\boldmath #1\unboldmath}}} 
\newcommand{\fname}[1]{\textsc{#1}} 
\newcommand{\ip}[2]{\langle#1,#2\rangle}
\newcommand{\bdita}[1]{{\boldmath\textbf{\textit{#1}}\unboldmath}}
\newcommand{\intersection}{\fname{Intersection}}
\newcommand{\join}{\fname{Join}}
\newcommand{\node}{\fname{Node}}
\newcommand{\leftJoin}{\fname{LeftJoin}}
\newcommand{\rightJoin}{\fname{RightJoin}}
\newcommand{\expose}{\fname{Expose}}
\newcommand{\union}{\fname{Union}}
\newcommand{\intersect}{\intersection}
\newcommand{\diff}{\fname{Diff}}
\newcommand{\R}{\mathbb{R}}
\newcommand{\matht}{\mathcal{T}\xspace}
\newtheorem{theorem}{Theorem}[section]
\newtheorem{lemma}[theorem]{Lemma}
\let \originalleft \left
\let\originalright\right
\renewcommand{\left}{\mathopen{}\mathclose\bgroup\originalleft}
\renewcommand{\right}{\aftergroup\egroup\originalright}
\newtheoremstyle{exampstyle}
{.5em} 
{1em} 
{\it} 
{.5em} 
{\it \bfseries} 
{.} 
{.5em} 
{} 
\renewenvironment{proof}[1][\proofname]{\par
  \vspace{-2\topsep}
  \pushQED{\qed}%
  \normalfont
  \topsep0pt \partopsep0pt 
  \trivlist
  \item[\hskip\labelsep
              \itshape
              #1\@addpunct{.}]\ignorespaces
}{%
  \popQED\endtrivlist\@endpefalse
}
\crefname{section}{Sec.}{Sec.}
\crefname{table}{Tab.}{Tab.}
\crefname{figure}{Fig.}{Fig.}
\newcommand{\pushright}[1]{\ifmeasuring@#1\else\omit\hfill$\displaystyle#1$\fi\ignorespaces}
\newcommand{\pushleft}[1]{\ifmeasuring@#1\else\omit$\displaystyle#1$\hfill\fi\ignorespaces}
\begin{document}

\makeatletter
\gdef\@copyrightpermission{
  \begin{minipage}{0.2\columnwidth}
    \href{https://creativecommons.org/licenses/by/4.0/}{\includegraphics[width=0.90\textwidth]{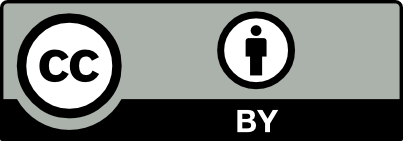}}
  \end{minipage}\hfill
  \begin{minipage}{0.8\columnwidth}
    \href{https://creativecommons.org/licenses/by/4.0/}{This work is
    licensed under a Creative Commons Attribution International 4.0 License.}
  \end{minipage}
}
\makeatother



\settopmatter{authorsperrow=4}
%

\title{Parallel Dynamic Spatial Indexes}

\author{Ziyang Men}
\orcid{0000-0001-7290-690X}
\affiliation{%
  \institution{UC Riverside}
  \city{Riverside}
  \country{USA}
}
\email{ziyang.men@email.ucr.edu}

\author{Bo Huang}
\orcid{0009-0000-9314-1096}
\affiliation{%
  \institution{UC Riverside}
  \city{Riverside}
  \country{USA}
}
\email{bo.huang@email.ucr.edu}

\author{Yan Gu}
\orcid{0000-0002-4392-4022}
\affiliation{%
  \institution{UC Riverside}
  \city{Riverside}
  \country{USA}
}
\email{ygu@cs.ucr.edu}

\author{Yihan Sun}
\orcid{0000-0002-3212-0934}
\affiliation{%
  \institution{UC Riverside}
  \city{Riverside}
  \country{USA}
}
\email{yihans@cs.ucr.edu}

\begin{abstract}
  Maintaining spatial data (points in two or three dimensions) is crucial and has a wide range of applications, such as graphics, GIS, and robotics.
  To support efficient updates and queries on the spatial data,
  many data structures, called \emph{spatial indexes}, have been proposed, e.g., \kdtree{s}, oct/quadtrees (also called \orth{s}), R-trees, and bounding volume hierarchies (BVHs).
  In real-world applications, spatial datasets tend to be highly dynamic, requiring batch updates of points with low latency.
  This calls for efficient parallel batch updates on spatial indexes.
  Unfortunately, there is very little work that achieves this.

  In this paper, we systematically study parallel spatial indexes, with a special focus on achieving high-performance update performance for highly dynamic workloads.
  We select two types of spatial indexes that are considered optimized for low-latency updates: \orth{} and R-tree/BVH.
  We propose two data structures: the \porth{}, a parallel \orth{}, and the \cpamtree{} family, a parallel R-tree/BVH.
  Both the \porth{} and the \cpamtree{} deliver superior performance in batch updates compared to existing parallel \kdtree{s} and \orth{s}, while preserving better or competitive query performance relative to their corresponding \orth{} and R-tree counterparts.
  We also present comprehensive experiments comparing the performance of various parallel spatial indexes and share our findings at the end of the paper.
  \let\thefootnote\relax\footnote{To appear: The ACM SIGPLAN Symposium on Principles and Practice of Parallel Programming (PPoPP), 2026. }
  \setcounter{footnote}{0}
\end{abstract}

\begin{CCSXML}
  <ccs2012>
  <concept>
  <concept_id>10003752.10003809.10010170.10010171</concept_id>
  <concept_desc>Theory of computation~Shared memory algorithms</concept_desc>
  <concept_significance>500</concept_significance>
  </concept>
  </ccs2012>
\end{CCSXML}

\ccsdesc[500]{Theory of computation~Shared memory algorithms}


\fancyhead{} 

\maketitle
\section{Introduction}

Spatial data widely appear in geographic information systems (GIS), spatial databases, computer graphics, robotics and its planning, and many other domains.
Efficiently processing such geometric objects (usually points) in two or three dimensions is of great importance, for both maintenance (construction, insertion, deletion) and queries (range queries, nearest-neighbor queries, etc.).

Given the wide applicability, many well-known data structures (usually called ``spatial indexes'') have been proposed to handle spatial data, such as \kdtree{s}\cite{bentley1975multidimensional}, oct/quadtrees\cite{finkel1974quad} (collectively referred to as \emph{orth-trees}), range trees~\cite{bentley1978decomposable}, R-trees~\cite{guttman1984r}, and bounding volume hierarchies (BVHs)~\cite{akenine2019real}.
Spatial indexes typically organize points as a tree, with each subtree corresponding to a subspace (not necessarily non-overlapping).
The bounding boxes of the subspaces can be used to prune subtrees during queries.
For instance, consider a nearest-neighbor search: when the search reaches a subtree, if its sub-region is farther from the query point than the current nearest neighbor, the subtree can be pruned.
Despite maintaining different invariants, all these trees share the same high-level intuition: skip most of the objects in queries by pruning, leading to efficient query performance.

Real-world applications can involve highly dynamic data, and updates may be latency-sensitive or throughput-sensitive.
For example, in 3D games, moving objects must be reflected quickly to affect lighting and collision detection, whereas GIS applications often ingest high-volume sensor streams where total update throughput is critical.
In both scenarios, updates frequently arrive in batches and must be incorporated into the index promptly.
To handle both updates and queries efficiently, different spatial indexes offer different trade-offs.
Traditionally, \kdtree{s} are considered highly efficient for queries due to their strongest invariant (splitting at object medians), but updates are costly.
\orth{s} offer competitive query performance and faster updates due to their simpler invariant (splitting at spatial medians).
R-trees/BVHs encompass a large family of solutions; they usually provide the simplest and fastest updates but slower queries.

With the ever-growing data volume, \emph{parallelism} becomes essential in designing efficient data structures.
Unfortunately, little work is known on parallel spatial indexes with batch updates.
In the two famous libraries, CGAL~\cite{fabri2009cgal} and Boost~\cite{schaling2011boost}, most spatial indexes are sequential.
The only exception is CGAL's \kdtree{}, but it has known scalability issues~\cite{blelloch2022parallel, men2025parallel}.
Parallel construction for range trees was described in~\cite{sun2018pam}, but it does not support batch updates.
In 2022, Blelloch and Dobson~\cite{blelloch2022parallel} proposed \zdtree{}, the first parallel quadtree.
The idea is to leverage the Morton curve, a space-filling curve (SFC) that maps 2D or 3D points to 1D integers, and use this information to facilitate construction and batch updates.
However, \zdtree{s} are slower than the parallel \kdtree{} (the \pkdtree{}), proposed later~\cite{men2025parallel}, despite better theoretical bounds for updates ($O(\log n)$ vs.\ $O(\log^2 n)$ per updated point).
We believe the main reason is the I/O (cache) optimizations in \pkdtree{}.
More interestingly, despite R-trees/BVHs having the simplest structure, we are aware of little work on parallel batch updates for them.
Indeed, most existing approaches are either based on single insertions/deletions~\cite{schaling2011boost,guttman1984r, sellis1987r, arge2002efficient,bittner2015incremental},
or fully rebuilding the tree upon updates~\cite{gu2013efficient,pantaleoni2010hlbvh,viitanen2018ploctree}.
The only relevant work~\cite{Qi2018theoretically} uses the logarithmic method, which can substantially slow down the query time (see more details in \cref{pre:existingSpatilIndex}).
Hence, it is natural to ask whether \orth{s} and R-trees/BVHs can still leverage their strengths for highly dynamic workloads in the parallel setting.
In particular, we investigate whether they can achieve much faster construction and batch updates (with better theoretical guarantees) than \kdtree{s} in parallel, while preserving query performance as in their sequential counterparts.

\textbf{In this paper, we systematically study parallel spatial indexes, with a special focus on achieving high-throughput updates and good query efficiency in highly dynamic workloads}.
We propose two new (families of) data structures: \textbf{\porth{s}} and the \textbf{\cpamtree{} family}.
We integrate these data structures into a library called the \emph{Parallel Spatial Index Library}~\cite{psiCode}, abbreviated as PSI-Lib or {\pspt}.

We first show our design for a parallel \orth{} called the \porth{}.
Almost all existing \orth{s}~\cite{kim2013scalable, warren1993parallel, blelloch2022parallel, pantaleoni2010hlbvh, lauterbach2009fast} use space-filling curves (SFCs) to accelerate construction and updates.
However, simply computing and sorting the SFC codes of the points already requires several passes of reading and moving all data, which is time-consuming.
In this paper, we present the design of \porth{s} that does not use SFCs.

By definition of orth trees, in $D$ dimensions, 
our main idea is to directly split the space evenly into $2^D$ buckets (subspaces), 
and partition input points into each bucket accordingly. 
To do this, we borrow the idea of the \emph{sieving algorithm} from the \pkdtree{}~\cite{men2025parallel}, which directly reorders the input points and gather those in the same buckets together. Then each bucket is processed in parallel. This allows for I/O-efficient construction and batch update algorithms for \orth{}.

Conceptually, our algorithms are equivalent to integer-sorting SFC codes, but without generating, storing, or using them.
We believe the algorithmic idea is interesting, and refer readers to \cref{sec:orth_tree} for algorithmic details and analysis.

Our next question, then, is whether SFCs are still useful spatial indexes.
As mentioned, SFCs have been used in both \orth{} and R-trees/BVHs~\cite{kim2013scalable, warren1993parallel, blelloch2022parallel, pantaleoni2010hlbvh, lauterbach2009fast, gu2013efficient, qi2020packing, tropf1981multidimensional, kamel1993packing}.
However, we are unaware of any implementations with update performance competitive with \pkdtree{} and \ourorth{s}, mostly due to limited or no parallel support.

In this paper, we propose the \cpamtree{} family, which supports extremely fast updates (as R-trees are supposed to) while maintaining query performance competitive with existing R-trees/BVHs.
To achieve this, our backbone is the \pactree{}\cite{dhulipala2022pac}, a parallel balanced binary search tree.
The key insight of \pactree{s} is to use join-based algorithms\cite{adams1992implementing, adams1993functional, blelloch2016just, sun2018pam} to efficiently rebalance during parallel updates, and to use leaf blocking (maintaining 16--32 objects in each leaf in a flat array) to improve cache locality.
To support spatial queries, a simple approach is to store points using their SFC codes as keys in a \pactree{} and augment each tree node with bounding boxes.
However, this plain adaptation yields poor update speed (up to $3.5\times$ slower than \pkdtree{}; see the columns ``\cpamh'' and ``\cpamz'' in \cref{fig:heatmap}).
We observe that the main bottleneck is maintaining the SFC-induced \emph{total} order over all points in \pactree{s}.
To address this challenge, we carefully redesign the join-based algorithms in \pactree{s} to maintain spatial data under only a \emph{partial order}, i.e., the points in leaves are allowed to be unsorted after insertions/deletions.
We provide more details in \cref{sec:spac}.
We refer to our design as the Spatial \pactree{}, or \spactree{} for short. In \ourlib{}, we adopt both Morton curves (\mortontree{}) and Hilbert curves (\hilberttree{}).

Our \ourorth{s} and \spactree{s} are backed by strong theoretical support.
We show that the update cost per object is $O(\log n)$ for a \spactree{} and $O(\log \Delta)$ for a \ourorth{} ($\Delta$ is the aspect ratio, see \cref{sec:orth_analysis}), which is much stronger than $O(\log^2 n)$ for a \pkdtree.
Our batch updates achieve polylogarithmic span, indicating strong and scalable parallelism.

We tested \ourlib{} on workloads with various input distributions, query distributions, query types, and update patterns. 
We compare \ourlib{} with existing parallel and sequential baselines including \pkdtree{s}, \zdtree{s}, etc.
Our experiments simulates both a static setting and a highly dynamic setting where updates are consecutively applied to an initial tree.
This setting better reflects the capability of each data structure to handle highly dynamic workloads,
especially showcases whether and how the index quality are affected under a progressively evolving dataset.
With our new algorithms, both \porth{} and \cpamtree{} achieved superior construction and update performance, while preserving comparable query performance to regular \orth{} and R-trees.
\porth{} is almost always the fastest on uniformly distributed data in construction and queries, and is close to the best on updates.
\cpamtree{} supports extremely fast parallel batch updates---it can be 2-6 times faster than \pkdtree{s}, and is especially good for skewed distribution of input points, queries, or insertion/deletion orders. With comprehensive experiments, we share our findings in \cref{sec:exp:summary}, and visualize the query-update tradeoff of each parallel spatial index in \cref{fig:summary}.
Our code is available at~\cite{psiCode}.



\hide{
  \section{Introduction}\label{sec:intro}

  Managing spatial data is ...

  One of the common approach to manage the spatial data is partitioning the space into smaller regions, and then storing the data in a hierarchical tree structure.
  Data structures such as the \orth{}~\cite{xxx} (known as \quadtree{} in 2-dimensional space and \octtree{} in 3-dimensional space) is widely used for this purpose.
  Unlike its counterpart \kdtree{}~\cite{xxx}, which partitions the space based on the data distribution using the object median,
  the \orth{} organizes the data hierarchically to mirror a regular grid structure, i.e., divides the space into equal-sized regions.
  Such scheme allows the \orth{} a faster tree construction than \kdtree{} when inputs are uniformly distributed in the space, and enables it to handle the updates more efficiently without sacrificing the query performance, meanwhile provides a more natural way to simulate the regular subdivision in graphic and spatial indexing applications~\cite{xxx}.
  However, we observe a significant gap between the wide usage of \orth{} and the lack of efficient parallel algorithms for it for all the basic operations, including construction, batch updates, and query.
  Existing parallel implementation for \orth{} are either suboptimal in cache efficiency~\cite{xxx} or integrates limited parallisim~\cite{xxx} so that is not scalable to large datasets~\cite{xxx}.
  Actually, the tree construction of these implementation is much slower than the recently reported parallel trees such as the parallel \kdtree{}~\cite{men2025parallel} even when the input is uniform, indicating a significant space for improvement of tree construction algorithms.

  Another common approach to manage the low-dimensional data is to utilizes the space-filling curve (SFC), such as \hcurve{} and \zcurve{}, to map the data into a single dimension in a way that largely preserves spatial locality -- points are close to each other in $k$-dimensional space tends to have one-dimensional keys.
  Such method enables to handle the data using the well-tuned one-dimensional data structures such as \btree{s}, binary search tree and interval trees.
  The contiguous location of points introduces more cache-friendly access pattern, whereas arbitrary tree traversals scatter memory accesses.
  Existing parallel algorithms on SFC-based indexes handles the updates in a poor way:
  some implementations such as \cite{xxx} is static and does not support the batch updates,
  the only dynamic implementation we are aware of is the \zdtree{}~\cite{zdtree},
  where they simply insert and delete the points as is without any balancing scheme, resulting a dramatically performance degradation in the consecutive update scenario.
  Moreover, we notice that there is no implementation supports persistence and Multi-Version Concurrent Control in the dynamic SFC-based indexes, which is a common requirement in many applications, such as~\cite{xxx}.

  In this paper, we overcome above issues by proposing the \ourorth{} (Parallel \orth{}) and \ourcpam{} (Join-based Spatial Partition tree), which are parallel in-memory trees that is efficient in both work and span.

  Spatial data processing plays an important role in a wide range of real-wold applications, including geographic databases, computer graphics, robotic planning, etc. Such data typically consist of points or other geometry objects in two or three dimensions, and often handle tasks such as nearest-neighbor search and range queries.

  To facilitate such applications, many data structures have been proposed to handle spatial indexes, such as kd-trees~\cite{}, oct/quadtrees~\cite{} (also collectively referred to as \emp{orth-trees}), range trees~\cite{}, priority trees~\cite{}, R-trees~\cite{}, etc. Such data structures usually organize the objects in a tree-based structure, and maintains the bounding boxes of each subtree, explicitly or implicitly, to help effectively prune unrelated subtrees in queries.
  Take the nearest neighbor search as an example, when a search reaches a subtree, if the entire bounding box is farther than the current nearest neighbor,
  then the subtree can be skipped in the query, potentially leading to much fewer objects searched in the query.
  For this reason, these spatial data structures are widely used across many applications.

  Despite the fast queries, an inherent challenge for these data structures is the expensive, or complicated process, for handling frequent changes of the dataset. In most of the cases, changes may require an insertion/deletion/update of batch of objects. Reflecting the changes on to the data structure in time (i.e., with low latency) is essential to maintaining a dynamic dataset in the application of interest. For example, in computer graphics, a moving object in a scene may result in a batch of points to change their position every frame.
  As such, the tree either needs to be able to handle rapid batch update in an efficient way~\cite{}, or needs to be reconstructed with minimum delay~\cite{}.
  In these cases, while certain data structures such as kd-trees and range trees typically offers strong performance and/or worst-case bounds in queries, these strong properties also add extra burden in dynamizing or reconstructing the tree.
  Consequently, for highly-dynamic scenarios, practitioners typically favor spatial data structures such as octrees/quadtrees, R-trees, and bounding volume hierarchies (BVHs).
  Although these structures do not provide the strong query performance or worst-case guarantees of kd-trees and range trees, their simplicity and flexibility make them well suited to rapid, massive updates.

  Unlike previous work of zd-trees that uses space filling curves and integer sort for performance engineering, our \porth{} borrows the idea from I/O-efficient parallel kd-tree. In particular, we observe that simply computing and sorting the morton code of all input data already involves several rounds of reading and moving all data, which can already consume much time.
  Instead of presorting them, we borrow the idea of \emph{sieving algorithm} in pkd-tree, achieving I/O-efficient algorithms of construction and batch updates for orth-trees.

  In addition, inspired by zd-tree's idea on using space filling curves, we proposed another data structure, the \cpamtree{} family, which is compatible with any space filling curves, and can be viewed as a special type of R-tree or BVH data structure. Instead of using the morton curve to aid the construction and update of orth-trees, we directly embed the code of the space filling curves as the keys in a parallel search tree. In particular, we adopted the algorithms in \pactree{}~\cite{dhulipala2022pac}, a parallel and I/O-optimized binary search tree.
  More precisely, \pactree{} is based on a weight-balanced tree, and used the join-based algorithm~\cite{adams1992implementing, adams1993functional, blelloch2016just, sun2018pam} for rebalancing in parallel construction, insertion and deletion.
  To make the data structure more cache-friendly, \pactree{} wraps the regions close to the leaves into flat arrays, which greatly reduces the space-overhead and improves locality, enabling better performance. We present more details of this algorithm in \cref{sec:prelim}.
  To support spatial queries, we use the \emph{augmented values} in \pactree{} to maintain the bounding boxes of each subtree.
  However, we note that directly embedding the code as the key for \pactree{} does not directly give competitive performance compared to other spatial data structures such as \pkdtree{s} (up to 4x slower for frequent updates).
  The reason being that, maintaining the full ordering of points based on the space filling curve code is redundant---for the last several levels, the strict ordering often does not significantly help with pruning in queries, but requires a considerable amount of time to maintain. To address this challenge, we relax the ordering in the leaves of \pactree{s}, and carefully redesigned the algorithms such that we allow for minimal changes to make them compatible with the exiting algorithmic framework in \pactree{s}.

}




%
\section{Preliminaries and Related Work}
\label{sec:prelim}

\begin{table}
  \small
  \begin{tabular}{|>{\boldmath}l<{\unboldmath}p{3.2cm}>{\boldmath}l<{\unboldmath}p{2.6cm}|}
    \hline
    $T$         & \multicolumn{3}{l|}{a (sub-)\kdtree, also the set of points in the tree}                                             \\
    $\leafwrap$ & \multicolumn{3}{l|}{leaf wrap threshold (leaf size upper bound)}                                                     \\
    $k$         & \multicolumn{3}{l|}{required number of nearest neighbors in a query}                                                 \\
    $\skheight$ & \multicolumn{3}{l|}{number of levels in a tree sketch}                              \\
    $\matht$    & \multicolumn{3}{l|}{tree skeleton at $T$ with maximum levels $\skheight$}                                            \\
    $P$         & \multicolumn{3}{l|}{input/update point sequence}                              \\
    $P[l,r)$    & \multicolumn{3}{l|}{points in $P$ over the index range $[l, r-1]$}                                               \\
    $T_p$       & \multicolumn{3}{l|}{the pivot point associate with non-leaf node $T$}                                 \\
    $T_\ell$    & left child of $T$                                                                       & $T_r$ & right child of $T$ \\
    $n$         & tree size                                                                               &
    $m$         & batch size                                                                                                           \\
    $D$         & number of dimensions                                                                    &
    $d$         & a certain dimension                                                                                                  \\
    $M$         & \multicolumn{3}{l|}{small memory (cache) size}                                                                \\
    $B$         & \multicolumn{3}{l|}{memory block (cacheline) size}  \\
    \hline
  \end{tabular}
  \caption{\textbf{Notations used in this paper.} }
  \label{tab:notations}
\end{table}

\hide{
  \begin{table}
    \small
    \begin{tabular}{|cp{7.5cm}|}
      \hline
      $M$         & small-memory (cache) size                                                             \\
      $B$         & cacheline size                                                                        \\
      $D$         & number of dimensions                                                                  \\
      $d$         & a certain dimension                                                                   \\
      $k$         & required number of nearest neighbors in a query                                       \\
      $T$         & a (sub-)\kdtree, also the set of points in the tree                                   \\
      $T.lc$      & left child of $T$, so as for $T.rc$                                                   \\
      $\skheight$ & number of levels in tree sketch (i.e., that are built at a time)                      \\
      $\matht$    & shorthand of $\matht_\skheight$, tree skeleton at $T$ with maximum levels $\skheight$ \\
      $P$         & input point set (for updates, $P$ is the batch to be updated)                         \\
      $n$         & tree size                                                                             \\
      $m$         & batch size for batch updates                                                          \\
      $S$         & samples from $P$                                                                      \\
      $s$         & size of the $S$                                                                       \\
      $\os$       & oversampling rate                                                                     \\
      $\balpara$  & balancing parameter                                                                   \\
      $\leafwrap$ & upper bound for size of leaf wrap                                                     \\
      \hline
    \end{tabular}
    \caption{Notations used in this paper. }
    \label{tab:notations}
  \end{table}
}

Throughout the paper, we use $n$ to denote the input size or the tree size.
We use the $\log n$ notation to denote the $\log_2 (n+1)$ logarithm.
We summarize notations used in this paper in \cref{tab:notations}.

\subsection{Computational Models}

We consider the shared-memory multiprocessor setting with the classical fork-join paradigm with binary forking~\cite{BL98,arora2001thread,blelloch2020optimal}.
Each computational thread is a sequential Random Access Machine (RAM) augmented with a fork instruction that spawns two child threads executing in parallel, with the parent thread resuming upon completion of both children.
Parallel for-loops are efficiently simulated through logarithmic levels of forking.
When analyzing algorithms, we use the work-span model, where the work is the total number of operations in the algorithm and the span is the longest dependence chain in the parallel computation.
Using randomized work-stealing schedulers, a computation with work $W$ and span $S$ executes in $W/\rho+O(S)$ time with high probability (in $W$) on $\rho$ processors~\cite{BL98,arora2001thread,gu2022analysis}.

We use the ideal-cache model~\cite{Frigo99} to analyze the I/O cost of our algorithms.
In this model, memory is divided into two levels: a fast memory (cache) of size $M$ and an arbitrarily large slow memory.
The CPU can only access data in the fast memory (at no cost), and data is transferred between the two levels in blocks of size $B$.
Each block transfer incurs unit cost.
The cache is fully associative, and the optimal offline cache replacement policy is used.
The cache complexity of an algorithm is measured by the number of block transfers between the two levels of memory during its execution.

\subsection{Spatial Data}\label{sec:prelim-spatial}
In this paper, we study points in Euclidean space $\mathbb{R}^D$ for $D=2$ or $3$, although the proposed techniques can generalize to shapes and any constant integer $D>1$.

\myparagraph{Queries on Spatial Data.}
To benchmark the quality of spatial indexes, we use standard $k$-NN queries and range queries.
A $k$-nearest neighbor ($k$-NN) query takes a set of points $P$ and a query point $q$ as input, and returns the $k$-closest points to $q$ in $P$.
A range query takes a set of points $P$ and an axis-aligned rectangle subregion~$r$.
The range-count query returns the number of points in $P$ within $r$, and the range-list query returns all points within $r$.

\myparagraph{Spatial Filling Curves.}
A spatial filling curve (SFC) embeds multidimensional points into a one-dimensional sequence.
In \ourlib{}, we use \emph{Z-curve (\zcurve{})} and \emph{\hcurve{}}, illustrated in \cref{fig:SFC}.
Both of them encode each point as an integer, which determines the point's order along the curve.
For integer coordinates, both Hilbert- and Z-curve can be computed in a constant time.
SFCs are widely used to facilitate spatial indexes~\cite{blelloch2022parallel,pantaleoni2010hlbvh,lauterbach2009fast,gu2013efficient,qi2020packing,kamel1993packing}.

\subsection{Existing Commonly-Used Spatial Indexes\label{pre:existingSpatilIndex}}
\vspace{-.2em}
\myparagraph{Space-Partitioning Trees: \orth{s} and \kdtree{s}.}
In space-partitioning trees, each node represents a subspace.
All of its children form a non-overlapping partition of that subspace, usually by axis-aligned partition hyperplanes, i.e., a splitting dimension $d$ and a coordinate $x$.
Space-partitioning trees thus differ in how they select partition hyperplanes.

As typical examples, a \kdtree{}~\cite{bentley1975multidimensional} chooses the median coordinate in the splitting dimension across all points, and thus always yields a balanced partition into two subtrees.
An orth-tree in $D$ dimensions partitions the space into $2^D$ subspaces evenly using the midpoint in each dimension (and is therefore a $2^D$-ary tree).
Specifically, an orth-tree is called a quadtree~\cite{finkel1974quad} in 2D and an octree in 3D~\cite{jackins1980oct}.

There are parallel versions of both \kdtree{s} and \orthtree{s}. Blelloch et al.~\cite{blelloch2022parallel} proposed a parallel \orth{} called \zdtree{}, which uses Morton curve to facilitate construction and updates.
Yesantharao et al.~\cite{yesantharao2021parallel} proposed two parallel \kdtree{s}, \bhltree{} and \logtree{}.
Only \logtree{s} support efficient parallel batch updates, using the \emph{logarithmic method}, i.e., it maintains $O(\log n)$ trees with sizes 1, 2, ... $n/2$, such that a batch update can be broken down into at most $O(\log n)$ tree reconstructions.
However, this method can greatly slow down queries~\cite{men2025parallel}.
A recent work proposed the \pkdtree{}~\cite{men2025parallel} that avoids logarithmic method, and achieves optimal work and cache complexity for parallel construction and batch updates.
The underlying idea is to use sampling to approximate the object median, together with the \textit{sieving} algorithm to partition points in an I/O-efficient manner.
Our \porth{s} also borrow this idea; see \cref{sec:orth_tree} for details.
However, the \pkdtree{} requires $O(m\log^2 n)$ work to update a batch of size $m$.
We will show how \ourlib{} achieves better bounds.

\myparagraph{Object-Partitioning Trees: R-Trees/BVHs.}
In the object- partitioning trees, the objects (points) in each (sub)tree are partitioned into disjoint subsets, and each subset corresponds to a child node and is built recursively.
Each tree node typically stores a bounding box (or a bounding volume in 3D) that is the smallest enclosing axis-aligned region of all objects in its subtree.
Though named differently---\rtree{s} in databases and usually in 2D (``R'' for rectangle), and bounding volume hierarchies (BVHs) in graphics and usually in 3D (``V'' for volume)---they share the same underlying concept.
For simplicity, we use the term ``R-tree'' to refer to the general idea of object-partitioning trees.
They can be either binary~\cite{gu2013efficient,wald2008fast,bittner2015incremental} or have a larger branching factor~\cite{schaling2011boost,guttman1984r,kamel1993packing,gu2015ray}, and can be built either offline~\cite{schaling2011boost,gu2013efficient,pantaleoni2010hlbvh,viitanen2018ploctree} or incrementally (thus supporting updates)~\cite{schaling2011boost,guttman1984r,sellis1987r,arge2002efficient,bittner2015incremental,kamel1993packing}.

To our knowledge, the only parallel R-tree with batch updates is by Qi et al.~\cite{qi2020packing}, which uses the logarithmic method. However, as noted, the logarithmic method significantly slows down queries and is therefore non-ideal. There also exist lock-based concurrent R-trees~~\cite{chenhhc97concurrent,ngk94rlink}.


\hide{
  The \ptree{}~\cite{xxx} is a parallel binary search tree for representing sequences, ordered sets, ordered maps and augmented maps (as defined in~\cite{xxx}).
  The \ptree{} is a purely functional data structure, where each update operation creates a new version of the tree without modifying the original one by using the path copying, so that achieving functional updates in a low cost, e.g., $O(\log n)$ work per point.
  A \ptree{} can be constructed in $O(n\log n)$ work and $O(\log n)$ span.
  For batch updates,
  the key premitive in \ptree{} is the \join{} operation, which takes two trees $L$ with size $n$, $R$ with size $m$ and a key-value pair $(k,v)$, and returns a new tree that contains all elements in $L$, $R$ and $(k,v)$ in $O(\log n-\log m)$ work and span.
  By using the \join{} operation, \ptree{} can efficiently implement various batch operations, such as \union{}, \intersect{} and \diff{} with optimal $O(m\log(\frac{n}{m}+1))$ work and $O(\log n\log m)$ span.
  \myparagraph{The \pactree{}.}
}
\hide{
  The \ptree{}~\cite{sun2018pam, blelloch2016just, blelloch2022joinable} can be considered a standard parallel binary search tree (BST) that supports a full interface for construction, single and batch updates, set operations, and many 1D queries.
  The \ptree{} uses either the AVL tree, the red-black tree, the treap, or the weight-balance tree as the underlying structure, and adopts a so-called ``\join{}-based framework'' in a divide-and-conquer manner for achieving high update parallelism.
  Any update in the \ptree{} is immutable, which generates a new version and leads to high space usage.
  The \pactree~\cite{dhulipala2022pac} is built based on the \ptree{} with the leaf wrapping technique.
  Any subtree whose size is no more than a given threshold will be flattened into a compressed leaf node with all points stored in an array.
  In this way, the auxiliary space required for the tree structure becomes negligible compared to the size of the spatial data.
  To support the leaf wrapping technique and achieve both theoretical and practical efficiency, the ``\join{}-based framework'' needs to be redesigned for the \pactree{} to handle the transformation between compressed nodes and normal nodes.
}

\begin{figure}
  \centering
  \includegraphics[width=\columnwidth]{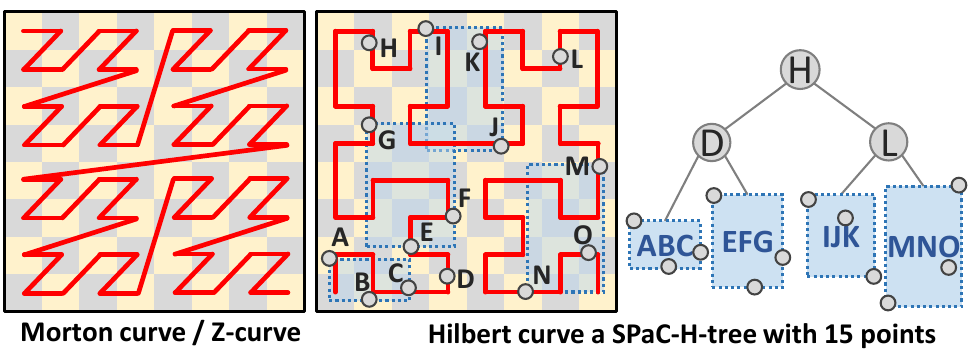}
  \caption{Space-filling curves and an example of a \spactree{} with 15 points and size-3 leaf wrapping.  Each leaf in this case has 3 points and its bounding box marked in blue.}\label{fig:SFC}
\end{figure}

\hide{
  An \orth{} is a hierarchical partitioning of $\mathbb{R}^D$ space using axis-aligned hyperplanes, which is known as the quad-tree~\cite{finkel1974quad} for $D=2$ and oct-tree~\cite{jackins1980oct} for $D=3$.
  Each interior node~$T$ in an \orth{} is associated $D$ orthogonal hyperlanes.
  The $i$-th hyperland $\ip{d_i}{x_i}$ splits the space along the \textit{spatial median} of the coordinates $x_i$ in the dimension $d_i$.
  We refer to such $D$ hyperlanes as the \emph{splitter} of $T$.
  In this case, every interior node partition the space into $2^D$ subspaces,
  and each subspace is associated with a child of $T$.

  \yan{Introduce PKD-trees.}

  \myparagraph{\kdtree{s} and \pkdtree{s}.} Similar to the \orth{}, \kdtree{s}~\cite{bentley1975multidimensional} are binary space partition trees that split the space along the \textit{object median}.
  The \pkdtree{}~\cite{men2025parallel} is the state-of-the-art parallel \kdtree{}
  that achieves optimal work and I/O complexity for both tree construction and batch updates, meawhile provides high parallelism.
  The underlying idea is to use sampling to approximate the object median, together with a I/O efficient \textit{sieving} algorithm to partition points in an I/O efficient manner.
  The \pkdtree{} maintains a weight-balanced tree during updates and use the partial rebuild~\cite{overmars1981maintenance} to restore the imbalanced sub-tree.

  \myparagraph{\rtree{s}/BVHs.}
  Unlike \kdtree{s} and \orth{s} that are based on spatial partition, \rtree{s} or bounding volume hierarchies (BVHs) are \emph{objection-partition} trees.
  Objects (points) in each (sub)tree is partitioned into several disjoint sets, and each set corresponds to a child node and is built recursively.
  To facilitate spatial queries, the each tree node augments a bounding box (or bounding volume in 3D) that is the smallest enclosing axis-aligned region of all objects in this subtree.
  The two names are used in two communities: \rtree{s} in the database community and BVHs for computer graphics, and they share the same idea.
  They can either be binary~\cite{gu2013efficient,wald2008fast,bittner2015incremental} or have a larger branching factor~\cite{schaling2011boost,guttman1984r,kamel1993packing,gu2015ray}, and can be built either offline~\cite{schaling2011boost,gu2013efficient,pantaleoni2010hlbvh,viitanen2018ploctree} or incrementally (thus supporting updates)~\cite{schaling2011boost,guttman1984r, sellis1987r, arge2002efficient,bittner2015incremental,kamel1993packing}.
  In this paper, we will mainly use the term R-trees.

  \myparagraph{The \ptree{} and The \pactree{}.}\yan{also refer to Bo's paper for this paragraph.}
  \hide{
    The \ptree{}~\cite{xxx} is a parallel binary search tree for representing sequences, ordered sets, ordered maps and augmented maps (as defined in~\cite{xxx}).
    The \ptree{} is a purely functional data structure, where each update operation creates a new version of the tree without modifying the original one by using the path copying, so that achieving functional updates in a low cost, e.g., $O(\log n)$ work per point.
    A \ptree{} can be constructed in $O(n\log n)$ work and $O(\log n)$ span.
    For batch updates,
    the key premitive in \ptree{} is the \join{} operation, which takes two trees $L$ with size $n$, $R$ with size $m$ and a key-value pair $(k,v)$, and returns a new tree that contains all elements in $L$, $R$ and $(k,v)$ in $O(\log n-\log m)$ work and span.
    By using the \join{} operation, \ptree{} can efficiently implement various batch operations, such as \union{}, \intersect{} and \diff{} with optimal $O(m\log(\frac{n}{m}+1))$ work and $O(\log n\log m)$ span.
    \myparagraph{The \pactree{}.}
  }
  The \ptree{}~\cite{sun2018pam, blelloch2016just, blelloch2022joinable} can be considered a standard parallel binary search tree (BST) that supports a full interface for construction, single and batch updates, set operations, and many 1D queries.
  The \ptree{} uses either the AVL tree, the red-black tree, the treap, or the weight-balance tree as the underlying structure, and adopts a so-called ``\join{}-based framework'' in a divide-and-conquer manner for achieving high update parallelism.
  Any update in the \ptree{} is immutable, which generates a new version and leads to high space usage.
  The \pactree~\cite{dhulipala2022pac} is built based on the \ptree{} with the leaf wrapping technique.
  Any subtree whose size is no more than a given threshold will be flattened into a compressed leaf node with all points stored in an array.
  In this way, the auxiliary space required for the tree structure becomes negligible compared to the size of the spatial data.
  To support the leaf wrapping technique and achieve both theoretical and practical efficiency, the ``\join{}-based framework'' needs to be redesigned for the \pactree{} to handle the transformation between compressed nodes and normal nodes.

  \myparagraph{Spacial Filling Curves.}
  A \bdita{spatial filling curve (SFC)}
  linearly order multidimensional points such that nearby points in this order are also close in multidimensional space.
  In this case, one can organize the high-dimensional points using data structure for linear sequences.
  A \bdita{\hcurve{}} of order $k$ is a curve $H_k^D$ that visits every vertex of a finite $D$ dimensional grid of size $2^k\times\cdots\times2^k=2^{kD}$ following a specific orientation pattern illustrated in \cref{fig:hilbert},
  which maintains good locality-preserving properties.
  The \bdita{\zcurve{}} is calculated by interleaving the coordinates of a point in binary form to create a single integer, then sorting the points based on their interleaved values as illustrated in \cref{fig:morton}.
  Compared with the \hcurve{}, the \zcurve{} offers more computational efficiency in practice, but preserves less spatial locality for the points.
}

\hide{
  \myparagraph{Geometric Queries.}\yan{move this to problem definition}
  In this paper, we focus on the \bdita{\knn{}} query (find the $k$-th nearest neighbor for a given point), the \bdita{range count query} (count the number of points within a given axis-aligned bounding box), and the \bdita{range report query} (report all points within the given axis-aligned bounding box).

  \myparagraph{Metric Space}
  We denote a metric space as $\mathcal{M}(M,d)$, where $M$ is the collection of points and $d:M\times M\to \R$ is the distance function.
  Let $S\subseteq M$ a finite point set and denote $B_p(r)\subseteq S$ the set of points enclosed by a ball with radius $r$ centered at $p$. Then $S$ has $(\rho,c)-$\bdita{expansion} if and only if $\forall p\in M$ and $r>0$:
  \begin{equation}
    |B_p(r)|\geq \rho \implies |B_p(2r)|\leq c\cdot|B_p(r)|
  \end{equation}
  The constant $c$ is refereed to \bdita{expansion rate} and $\rho$ is usually set to be $O(\log |S|)$. We say the expansion rate is \textit{low} if $c=O(1)$.
  Intuitively, the low expansion property ensures the points distributed uniformly in the space, i.e., it rules out the possibility that a region of dense points locate next to a large patch of empty space.
  Similarly, the \bdita{aspect ratio} $\Delta$ is defined as:
  \begin{equation}
    \Delta = \frac{\max d(x,y)}{\min d(x,y)}\quad \forall x,y\in S
  \end{equation}
  and is said to be \textit{bounded} if $\Delta< n^c$ holds for some constant $c>0$.

  Without loss of generality, we assume the input points have low expansion rate and bounded aspect ratio in the rest of this paper.

  Each tree node is associated with a BVH, which is the tightest bounding volume enclosing all points in its subtree (also known as minimum bounding rectangles).
  The interior ndoes of \rtree{} do not store actual points, and they are used to prune search space for spatial queries.
  All actual points are stored in leaf nodes, whose size are less than given leaf \textit{capacity}.
  \rtree{s} use heuristic strategies to partition the space, and cluster points according to different rules.
  The rules are generally based on spatial properties, such as locality.
  After partitioning the space, each child node will be processed recursively until the number of points to be processed less than given capacity.
}

\section{The Parallel Orth-tree (P-Orth Tree)}\label{sec:orth_tree}
In this section, we introduce our design of the Parallel \orth{} (\porth{}), which partitions points into nested regions recursively based on the spatial median.

\begin{figure}[t]
  \centering
  \includegraphics[width=\columnwidth]{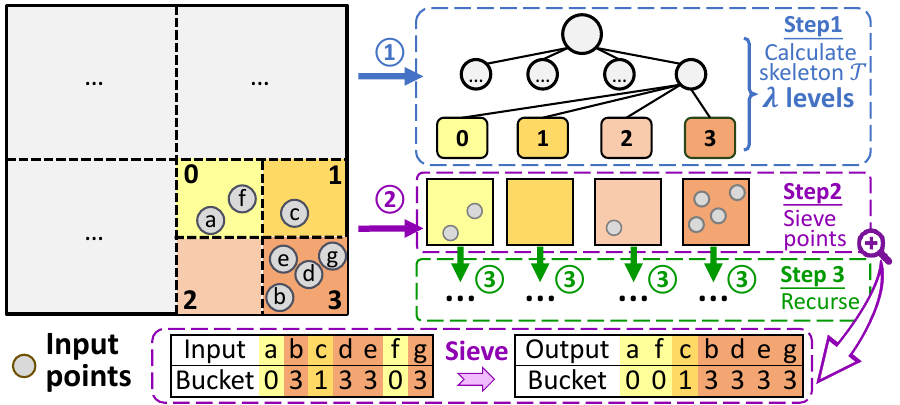}
  \caption{Construction and batch insertion for \porth{s}.}\label{fig:orth-build}
\end{figure}

\myparagraph{Previous algorithms.}
The na\"ive approach to construct or update an \orth{} is to distribute the points to subtrees level by level from the root until reaching the leaves~\cite{finkel1974quad,jackins1980oct}.
However, this approach is slow because the number of rounds of global data movement is proportional to the tree height, which can be large.
Hence, almost all subsequent \orth{s}~\cite{kim2013scalable,warren1993parallel,blelloch2022parallel,pantaleoni2010hlbvh,lauterbach2009fast} use SFCs, specifically the Morton curve (see \cref{fig:SFC}), to speed up the algorithm.
The high-level idea is to sort all input points in Morton order, which only requires $O(\log _M n)$ rounds of global data movement, where $M$ is the cache size.
Then, since \orth{s} always partition at the spatial median, a binary search on the sorted values can identify the partition hyperplane, and all points in one subtree also form a consecutive range in Morton order.
Blelloch and Dobson, in their \zdtree{} paper~\cite{blelloch2022parallel}, also use this idea to achieve a parallel \orthtree{}.

\myparagraph{Issues on Existing Works.}
Although the long-standing Morton-based approach achieves good work, span, and cache bounds, a closer look reveals two major drawbacks. 
\begin{itemize}
  \item \textbf{Performance.} This approach must additionally compute the Morton code for each point as preprocessing and sort the $\langle$code, point$\rangle$ pairs. This increases memory footprint and induces more rounds of reads and writes to all data, which leads to significant overhead (see ``\zdtree{}'' in \cref{fig:heatmap}).
  \item \textbf{Applicability.} While SFCs map higher-dimensional data into one dimension, they suffer from precision limitations. Most modern machines use 64-bit words, which suffices for 2D data (32-bit precision per dimension).
    However, 3D support is constrained to 21 bits per dimension, and handling higher dimensions ($D>3$) is mostly infeasible
    \footnote{It is possible to encode the SFC with higher precision integer, i.e., 128-bit words, but this may also be costly considering bit operations, sorting, and binary-searching in the SFC codes.}.
    Even in lower dimensions, a fallback to the na\"ive partition-based solution is needed when precision is exhausted in certain subregions, which is not elegant. 
\end{itemize}

\setlength{\algomargin}{0em}

\begin{algorithm}[t]
	\fontsize{9pt}{9.5pt}\selectfont
	\caption{Parallel \orth{} (\ourorth) construction\label{algo:orth-constr}}
	\SetKwFor{parForEach}{parallel-foreach}{do}{endfor}
	\KwIn{A sequence of points $P$, region box $H$.}
	\KwOut{A \porth{} $T$ on points in $P$.}
	\SetKw{MIN}{min}
	\SetKw{MAX}{max}
	\SetKwProg{MyFunc}{Function}{}{end}
	\SetKwInOut{Note}{Note}
	\SetKwInOut{Parameter}{Parameter}
	\Parameter{$\skheight$: the height of a tree skeleton.\\
		$\leafwrap$: the leaf wrap of the $k$d-tree.
	}
	\SetKwFor{inParallel}{in parallel:}{}{}

	\DontPrintSemicolon
	\vspace{.5em}
	\MyFunc{\upshape{\orthBuild{$(P, H)$}}}{
		\If{$|P|<\phi$}{
			\Return A leaf node with points $P$ and its bounding box
		}
		Build the tree skeleton $\matht$ by constructing the first $\skheight=(\log_2M)/2D$ levels based on $H$\label{line:buildskeleton}\\
		$B[]\gets $ Split $H$ based on $\matht$\label{line:buildRegion}\tcp*[f]{$B[i]$: the sub-region for bucket $i$}\\
		\tcp*{Reorder points to make those in the same bucket consecutive}  
		$R[]\gets \textsc{Sieve}(P, \matht)$\label{line:obtainSlices}\tcp*[f]{$R[i]$: the slice for all points in bucket $i$}\\
		\parForEach{\upshape external node $i$ of $\matht$}{
			$t\gets \orthBuild(R[i], B[i])$\label{recursive}\tcp*[f]{Recursive build}\\
			Replace the external node $i$ with $t$\\
		}
        Compute the bounding boxes for all internal nodes in $\matht$, and merge non-leaf subtrees with sizes no more than $\phi$\label{line:mergeup}\\
		\Return The root of $\matht$\label{returnroot}
	}

\end{algorithm}

\myparagraph{Our Solution.}
To overcome these issues, our \porth{} design entirely avoids SFCs.
We show that the sorting-based idea can be implemented conceptually equivalently without using SFC.
Consequently, our \ourorth{} is fast and flexible to any coordinate types and ranges (not necessary integers).
Theoretically, the \ourorth{} achieves strong bounds for both construction and batch updates.
Practically, \ourorth{s} outperform \pkdtree{s} and \zdtree{s} in almost all cases, except for very skewed distributions; see \cref{sec:exp} for details.
Below, we present our construction algorithm in \cref{sec:orth_constr}, update algorithm in \cref{sec:orth_update}, and cost analysis in \cref{sec:orth_analysis}.

\begin{algorithm}[t]
	\fontsize{9pt}{9.5pt}\selectfont
	\caption{Batch insertion for \porth{}\label{algo:orth_insert}}
	\SetKwFor{parForEach}{parallel-foreach}{do}{endfor}
	\SetKwProg{MyFunc}{Function}{}{end}
	\SetKwInOut{Note}{Note}
	\SetKwInOut{Parameter}{Parameter}
	\SetKwFor{pardo}{In Parallel:}{}{}
	\KwIn{A sequence of points $P$, a \porth{} $T$ with region $H$.}
	\KwOut{A \porth{} with $P$ inserted.}
	\Parameter{$\skheight$: the maximum height of a fetched tree skeleton.
	}
	\DontPrintSemicolon

	\vspace*{.5em}
	\tcp{The deletion is symmetric.}
	\MyFunc{\upshape{\orthBatchInsert{$(T,P,H)$}}}{
		\lIf{$P=\emptyset$\label{emptyP}}{\Return $T$}
		\If(\tcp*[f]{Insert into a leaf}){\upshape$T$ is a leaf\label{line:base2}}{
			\Return $\orthBuild(T \cup P, H)$\label{rebuildleaf}
		}
		$\matht\gets$ Retrive the skeleton at $T$\label{fetchinsert}\\
		$B[]\gets$ Split $H$ based on $\matht$ \tcp*[f]{$B[i]$: the sub-region for bucket $i$}\label{insertRegion}\\
		$R[]\gets\textsc{Sieve}(P,\matht)$\tcp*[f]{$R[i]$: the slice for points in bucket $i$}\\
		\parForEach{\upshape external node $i$ for $\matht$}{
			$t\gets \orthBatchInsert(i, R[i], B[i])$\label{orthInsertRecursive}\tcp*[f]{Recursive insertion}\\
			Replace the external node of $\matht$ with $t$\\
		}
		Update the bounding boxes of all affected nodes in $\matht$ \label{orthUpdateSkeleton}\\
		\Return The root of $\matht$
	}

\end{algorithm}

\subsection{P-Orth Tree Construction}\label{sec:orth_constr}
Our idea for \porth{} construction is to coordinate the ``conceptual'' sorting process together with the tree construction.
The goal is to build $\skheight$ levels of the tree at once with one round of data movement, and at the same time achieve high parallelism.
Here we adopt the $\textsc{Sieve}(P, \matht)$ function from~\cite{men2025parallel}, which distributes the point set $P$ based on a $\skheight$-level \emph{tree skeleton} $\matht$ in parallel.
At a high level, our algorithm is equivalent to integer sort on Morton codes, on the $\skheight D$ most significant bits in each round.
However, no codes needed to be computed, stored, or compared.
Note that $\textsc{Sieve}()$ is also used in the update algorithms.
We show our \porth{} construction in \cref{algo:orth-constr} and illustrate it in \cref{fig:orth-build}.

\cref{algo:orth-constr} has three steps as shown in \cref{fig:orth-build}.
The first step (lines \ref{line:buildskeleton}--\ref{line:buildRegion}) builds a tree skeleton $\matht$ with $\skheight=(\log_2M)/2D$ levels, where $M$ is the cache size.
This ensures that the number of leaves (external nodes) of $\matht$ is $2^{\skheight\cdot D}$ fits into the cache.
In theory, this step can be done in parallel, although given the small amount of work, in practice this step is run sequentially.
Note that computing the skeleton $\matht$ requires the bounding region for the current subtree, so we also need to compute the corresponding sub-regions for all $\matht$'s leaves (line~\ref{line:buildRegion}).

Once $\matht$ is built, the second step is to sieve the points in $P$ to $\matht$'s leaves.
This step is implemented by the $\textsc{Sieve}(P, \matht)$ function shown on \cref{line:obtainSlices}, introduced by the \pkdtree{} paper~\cite{men2025parallel}. 
This function reorders all points in $P$, such that all points are sorted by the buckets they belong to. The output $R[]$ records the slices of each bucket in the sorted array $P$, such that they can be passed to the recursive calls to deal with each bucket. 
As the number of buckets is usually a small number, the \textsc{Sieve} algorithm resembles a parallel counting sort~\cite{blelloch2020optimal}, which can be performed in an I/O-efficient manner~\cite{blelloch2010low}. 
For page limit, we omit the details of the $\textsc{Sieve}$ algorithm, and directly use the technique as a black box. We refer the readers to~\cite{men2025parallel} for more details. 
An illustration of the result of this step is shown in \cref{fig:orth-build}, and after that,
all points in the same leaf of $\matht$ are gathered together, conceptually stored in an array $R[]$.

\hide{
The method $\textsc{Sieve}()$~\cite{men2025parallel} is to rearrange the input array to make points belonging to the same subtree being contiguous.
This is implemented by dividing $P$ into chunks of size $l$, then counting the number of points in each leaf in $\matht$ in paralle, after which a matrix transpose is performed to compute the offsets for each leaf in each chunk, and finally, all points are distributed to the final destination in parallel.
The final step is to recursively build the \orth{} for each leaf (subtree) in parallel (line~\ref{recursive}).
Once finished, we update the bounding boxes of all internal nodes in $\matht$ (line~\ref{line:mergeup}) and return the root of the skeleton (line~\ref{returnroot}).
}

\subsection{Batch Updates for P-Orth Trees}\label{sec:orth_update}

Both batch insertion and deletion for \porth{s} closely resemble the construction algorithm.
Here we first introduce the batch insertion algorithm, given in \cref{algo:orth_insert}, and discuss the deletion algorithm later.

The batch insertion algorithm takes a batch of points $P$, and adds them to an existing \ourorth{} $T$.
To do so, we sieve the points also for $\skheight$ levels, and then recursively insert points to each bucket in parallel.
One can almost see a one-to-one mapping for these three steps in \cref{fig:orth-build} and \cref{algo:orth_insert}, except for some minor differences in handling base cases.
For deletions, an additional step is needed: for all affected leaves, we flatten their ancestors if the total subtree sizes are smaller than the leaf wrap threshold.
Our update algorithms remain simple since no rebalancing is needed for \orth{s}.


\subsection{Theoretical Analysis}\label{sec:orth_analysis}

Due to page limit, we defer the analysis to the \ifconference{the full version paper~\cite{psiPaperFull}}\iffullversion{\cref{app:porth-analysis}}, and only list the theorems here.

\vspace{-.5em}

\begin{theorem}\label{thm:porth}
  \cref{algo:orth-constr} constructs a \porth{} of size $n$ using $O(n\log \Delta)$ work, $O(\log n\log \Delta)$ span, and $O(n/B\log_M \Delta)$ cache complexity.
  A batch update of size $m=O(n)$ on a \ourorth{} of size $n$ uses $O(m\log \Delta)$ work, $O(\log m\log \Delta)$ span, and $O(m/B\log_M \Delta)$ cache complexity.
\end{theorem}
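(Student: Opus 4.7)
The plan is to bound cost level-by-level of the \porth{}, combining a depth bound, the per-invocation cost of \textsc{Sieve}, and an amortization across the disjoint recursive calls active at each depth.

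First I would establish that every \porth{} has depth $O(\log \Delta)$. Each descending step halves the side length of the enclosing box along every axis, so starting from a bounding box whose side is at most proportional to $\Delta$ times the minimum pairwise distance, after $O(\log \Delta)$ halvings the side falls below that minimum distance; no child then holds more than one point, and the recursion (which additionally stops when a subtree has $\leq \phi$ points) terminates. Next I would import the cost of \textsc{Sieve} from~\cite{men2025parallel}: invoked on $n'$ points with a $\skheight$-level skeleton, it performs $O(n'\skheight + 2^{\skheight D})$ work, $O(\log n')$ span, and $O(n'/B)$ cache complexity, provided the $2^{\skheight D}$ bucket counters fit in cache. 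The choice $\skheight = (\log_2 M)/(2D)$ makes $2^{\skheight D} = \sqrt{M}$, satisfying this constraint.

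I would then group the $O(\log \Delta)$ tree levels into $O(\log \Delta / \skheight) = O(\log_M \Delta)$ ``super-levels,'' one per recursive call of \orthBuild{}. At each super-level, the disjoint recursive calls collectively touch at most $n$ points, so their combined \textsc{Sieve} cost is $O(n\skheight)$ work and $O(n/B)$ cache misses, while the critical path through one super-level is $O(\log n)$ span. Multiplying by the $O(\log_M \Delta)$ super-levels and simplifying (using $\skheight \cdot \log_M \Delta = O(\log \Delta)$ and $\log_M \Delta \leq \log \Delta$) yields $O(n \log \Delta)$ work, $O(\log n \log \Delta)$ span, and $O((n/B)\log_M \Delta)$ cache, matching the construction bounds. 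Skeleton construction and bounding-box aggregation at line~\ref{line:mergeup} are dominated by the \textsc{Sieve} terms and contribute only lower-order costs.

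For the batch-update bounds, \orthBatchInsert{} in \cref{algo:orth_insert} is structurally identical to construction except that recursion terminates early at existing leaves, where a local rebuild via \orthBuild{} is performed on the merged points. Each of the $m$ update points therefore contributes at most one root-to-leaf descent through the skeleton, and the same super-level amortization gives $O(m \log \Delta)$ work, $O(\log m \log \Delta)$ span, and $O((m/B)\log_M \Delta)$ cache. The residual leaf rebuilds are covered because each leaf involved consumes only $O(\phi + m_{\text{leaf}})$ points, and these sizes sum to $O(m)$ across the batch when $m = O(n)$. Deletions follow identically, with the merge-up step of line~\ref{orthUpdateSkeleton} charged again to the ancestors of affected leaves; since the depth is $O(\log \Delta)$, this adds only a lower-order factor.

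The main obstacle will be cleanly charging the update costs to $m$ rather than $n$: a skeleton node is visited by every update point that passes through it, and its bounding-box recomputation concerns only the portion of its subtree that actually changed. Making this rigorous requires an accounting, in the spirit of the \pkdtree{} update proof, that (i) shows untouched subtrees contribute no work, span, or cache misses, (ii) verifies that the depth bound $O(\log \Delta)$ remains valid after arbitrary insertions and deletions so super-level counting still applies, and (iii) handles the cache cost of descending the skeleton itself, which is dominated by $O((m/B)\log_M \Delta)$ exactly because the bucket arrays at each super-level fit in cache.
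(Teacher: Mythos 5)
Your proposal follows essentially the same route as the paper's own analysis: a height bound of $O(\log \Delta)$ from repeated halving of the box side length, the per-round cost of \textsc{Sieve} with a skeleton of $\skheight=(\log_2 M)/(2D)$ levels so that the buckets fit in cache, and amortization over the $O(\log \Delta/\skheight)=O(\log_M\Delta)$ rounds of recursion; the paper first derives everything with $O(\log n)$ height under the bounded-aspect-ratio and low-expansion assumptions and then substitutes $O(\log\Delta)$, but the argument is the same. The one point where you and the paper diverge is the update cache bound: the obstacle you flag (charging the cost of touching existing tree nodes to $O((m/B)\log_M\Delta)$) is real, and the paper's own appendix does not resolve it in your favor---it instead proves the weaker bound $O(m(\log(n/m)+(1+\log_M m)/B))$, explicitly counting $O(m\log(n/m))$ un-amortized node accesses, so your claim that the skeleton descent is dominated by $O((m/B)\log_M\Delta)$ would need an argument the paper does not supply.
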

\vspace{-.5em}
Here, $\Delta$ denotes the aspect ratio, defined as $\frac{\max d(x,y)}{\min d(x,y)}$ for all points $x$ and $y$.
Note that $\log \Delta\ge \log \Theta(n^{1/D})=\Omega(\log n)$ when the point set contains no duplicates in $\mathbb{R}^D$.

With stronger assumptions---for instance, a \emph{bounded aspect ratio} ($\Delta \le n^c$ for some constant $c>0$) and a \emph{constant expansion rate} (full definition in the\ifconference{ the full version paper~\cite{psiPaperFull}}\iffullversion{~\cref{app:porth-analysis}})---we may obtain tighter bounds.
With bounded aspect ratio, we can show that the construction with $O(n\log n)$ work, $O(\log^2 n)$ span, and $O(n/B\log_M n)=O(\mbox{Sort}(n))$ cache complexity.
Updates have $O(m\log n)$ work, $O(\log m\log n)$ span, and $O(m/B\log_M n)$ cache complexity.
With both assumptions, a $k$-NN query can be answered in $O(k\log n)$ work~\cite{blelloch2022parallel}.




\hide{

  However, it requires asymptotically tree height rounds of data movement to finish all partitions, which is less efficient for small cache.
  A widely used alternative pre-sorts the points by their Morton codes and then recursively splits the Morton curve at its midpoint to build a binary tree, known as the \zdtree{}~\cite{kim2013scalable, warren1993parallel,blelloch2022parallel}.
  This achieves good theoretical guarantees by using the integer sort, while it still has drawbacks.
  First, it pipelines the Morton codes generation, sorting, and tree construction, which incurring more rounds of data movement.
  Actually, only the sorting and following tree construction is already slower than the state-of-the-art \kdtree{} (\pkdtree{})~\cite{men2025parallel}, which interleaves the points partition with the node creation.
  This is counter-intuitive as the \orth{} is generally considered to have faster build and update time than the \kdtree{} by its simpler structure, and implies the possible improvements for \orth{} construction.
  Second, reliance on Morton codes limits the \orth{} applicability in higher dimensions ($D>3$) due to the precision issue, which forces a fallback to the naive partition-based method that is less effective due to I/O in-efficiency.
}

\section{The Spatial \titlecap{PaC-tree (\spactree)}}\label{sec:spac}
This section presents the design of the Spatial PaC-tree (\spactree), a highly parallel R-tree with extremely fast construction and updates while maintaining good query speed.

\myparagraph{Existing R-trees.}
As introduced in \cref{sec:prelim}, R-trees are object-partitioning trees, leaving flexibility in the heuristics used to build them.
The original and early designs~\cite{guttman1984r,beckmann1990r,BerchtoldKK96,ohsawa1990new,SellisRF87} are incremental: points are inserted one by one; a greedy strategy iteratively selects a subtree for this point.
When a subtree is much heavier than its siblings, a split is applied by a heuristic (e.g., ``linear''~\cite{guttman1984r,ang1997new}, ``quadratic''~\cite{guttman1984r}, or ``R$^*$''~\cite{guttman1984r, beckmann1990r}).
While simple and highly dynamic, this approach is hard to generalize to parallel batch updates. 
Consequently, prior work on parallel R-trees has primarily focused on parallel queries~\cite{luo2012parallel,you2013parallel,kamel1992parallel,prasad2015gpu} or static construction (bulk loading)~\cite{papadopoulosM03,achakeev2012sort,schnitzerL99master,garcia1998greedy,shuhua2000design,leutenegger1997str,arge2008priority}.
However, for purely static scenarios, \kdtree{s} and \orth{s} are often preferable choices.

A promising approach to parallelize R-trees is via space-filling curves (SFCs).
SFCs map points in higher dimensions to 1D (see \cref{fig:SFC}), enabling all points to be organized in this 1D order using a binary search tree (BST) or a B-tree—equivalently yielding an R-tree if each node maintains its bounding box.
This idea was first noted by Tropf and Herzog~\cite{tropf1981multidimensional}, and later realized in the Hilbert R-tree~\cite{kamel1993packing,haverkort2008four}, which is built atop a B-tree.
Unfortunately, parallel batch update on B-trees can be challenging.
Qi et al.~\cite{qi2020packing} showed that the logarithmic method can sidestep parallel updates for B-trees, but it introduces substantial query overhead~\cite{men2025parallel}.

\myparagraph{The \pactree{}.}
The \pactree{}~\cite{dhulipala2022pac} is a parallel binary search tree (BST) with the leaf-wrapping technique to enable better space- and I/O-efficiency, where
a subtree of size under a threshold $\phi$ (typically 32) is flattened into a compressed leaf stored as an array.
It uses a ``\join{}-based framework'' in a divide-and-conquer manner for high parallelism, and supports the full BST interface, including construction, single and batch updates, and various 1D queries.

\myparagraph{Our SPaC-Tree.}
At first glance, \pactree{s} appear to provide a straightforward solution for parallelizing R-trees:
they can be directly adopted
to support an SFC-based approach, achieving both efficiency and high parallelism.
We implemented this straightforward design and, somewhat unexpectedly, found it much slower than \porth{s} and \pkdtree{s} (see \cpamh{} and \cpamz{} in \cref{fig:heatmap}).
The bottleneck is that a \pactree{} enforces a total order on points according to an SFC, which is overly costly.
In contrast, \porth{s} and \pkdtree{s}
leave points in the leaves unsorted.

To reduce update costs, we introduce the \emp{Spatial-PaC-tree (\spactree{})}.
The primary goal is to keep leaf points unsorted, which requires redesigning and disentangling parts of the underlying \pactree{} algorithms.
At a high level, compared with \pactree{}, the \spactree{} improves over two aspects: 1) integrating the entire construction algorithm into the sorting algorithm by delaying the computation of SFC to the first distribution round of the sorting algorithm, which improves the overall performance, and 2) allowing for unsorted leaves in batch updates, which reduces work for batch update, and has almost no negative impact on queries.

The remainder of this section presents the new design and its analysis.

\setlength{\algomargin}{0em}
\begin{algorithm}[t]
	\fontsize{9pt}{9pt}\selectfont
	\caption{Parallel \ourcpam{} construction\label{algo:ptree-constr}}
	\SetKwFor{parForEach}{parallel-foreach}{do}{endfor}
	\SetKwFor{parFor}{parallel-for}{do}{endfor}
	\KwIn{A sequence of points $P$.}
	\KwOut{A \ourcpam{} $T$ on points in $P$.}
	\SetKw{MIN}{min}
	\SetKw{MAX}{max}
	\SetKwProg{MyFunc}{Function}{}{end}
	\SetKwInOut{Note}{Note}
	\SetKwInOut{Parameter}{Parameter}
	\SetKwFor{inParallel}{in parallel:}{}{}
	\SetKwFor{pardo}{In Parallel:}{}{}

	\DontPrintSemicolon
	\vspace{.5em}
	\MyFunc{\upshape{\cpamBuild{$(P)$}}}{
		$A\gets $ Auxiliary sequence of empty pairs $\ip{code}{id}$ with size $|P|$\\
		$A'\gets$ \sortSfc{$(P, A)$}\label{line:sort}\\
		\Return \cpamBuildFromSorted{$(P, A')$}\label{line:buildCpamFromSorted}
	}

	\vspace{.5em}
	\tcp*{Modify the sample-sort to compute the SFC code with sorting}
	\MyFunc{\upshape{\sortSfc{$(P, A)$}}}{
		Sample points from $P$ and compute their SFC codes\\
		Sort samples and sub-sample them to get the pivots\\
        Partition $P$ into blocks, and compute offsets of blocks as $F[]$\\
		\parFor{\upshape $i$-th block $B$}{
			\parFor{\upshape $j$-th point $p$ in $B$} {
				$k\gets $ The SFC code of $p$\\
				$id\gets $ The id of $p$ \\
				$A[F[i]+j]\gets \ip{k}{id}$\label{line:orth-build-assign}\tcp*[f]{Store the code and id in $A$}\\
			}
			Sort the slice $A[F[i],A[F[i+1])$\label{line:orth-build-sort}\\
			Merge with samples to get counts for each block \label{line:orth-build-merge}
		}
		Redistribute $A$ to buckets $A'$ using the matrix transpose~\cite{axtmann2017place, blelloch2010low}, where the $i$-th bucket has offset $F'[i]$ \label{line:orth-build-redistribute}\\ 
		\parFor(\tcp*[f]{Recursive sorting}){the $i$-th bucket}{
		Sort the slice $A'[F'[i],A'[F'[i+1])$
	}
	\Return The sorted sequence $A'$
	}

	\vspace{.5em}
	\tcp*{Recursively construct the tree.}
	\MyFunc{\upshape{\cpamBuildFromSorted{$(P, A)$}}\label{line:buildFromSorted}}{
	$n\gets |P|$\\
	\If(\tcp*[f]{Input size is below the leaf wrapping}){$n\leq \leafwrap$}{
		Retrieve points $S \subseteq P$ using the ids in $A$ \\
		\Return A leaf node with points $S$ and its bounding box\label{line:base-case}}
	\Else{
	$m\gets n/2$\\
	\pardo{}{
	$L\gets$ \cpamBuildFromSorted{$(P[0, m), A[0, m))$}\label{line:buildLeft}\\
	$R\gets$ \cpamBuildFromSorted{$(P[m+1, n), A[m+1, n))$}\label{line:buildRight}
	}
	}
	$k\gets $ the point in $P$ with id in $A[m]$ \tcp*[f]{The pivot point}\\
	\Return An interior node with left child $L$, right child $R$, pivot $k$, and computing the bounding box from children\label{line:buildNode}
	}
\end{algorithm}

\hide{
	\begin{algorithm}[t]
		\fontsize{8pt}{8.5pt}\selectfont
		\caption{Parallel \ourcpam{} construction\label{algo:ptree-constr}}
		\SetKwFor{parForEach}{parallel-foreach}{do}{endfor}
		\KwIn{A sequence of points $P$.}
		\KwOut{A \ourcpam{} $T$ on points in $P$.}
		\SetKw{MIN}{min}
		\SetKw{MAX}{max}
		\SetKwProg{MyFunc}{Function}{}{end}
		\SetKwInOut{Note}{Note}
		\SetKwInOut{Parameter}{Parameter}
		\SetKwFor{inParallel}{in parallel:}{}{}
		\SetKwFor{pardo}{In Parallel:}{}{}

		\DontPrintSemicolon
		\vspace{.5em}
		\MyFunc{\upshape{\cpamBuild{$(P)$}}}{
			Compute the SFC code for each point in $P$. \label{line:code}\\
			Parallel sort $P$ by the SFC codes.\label{line:sort}\\
			\Return \cpamBuildFromSorted{$(P)$}\label{line:buildCpamFromSorted}
		}

		\vspace{.5em}
		\MyFunc{\upshape{\cpamBuildFromSorted{$(P)$}}\label{line:buildFromSorted}}{
			$n\gets |P|$\\
			\If{$n\leq \leafwrap$}{\Return A leaf node wraps $P$\label{line:base-case}}
			\Else{
				$mid\gets n/2$\\
				\pardo{}{
					$L\gets$ \cpamBuildFromSorted{$(P[0, mid])$}\label{line:buildLeft}\\
					$R\gets$ \cpamBuildFromSorted{$(P[mid+1, n])$}\label{line:buildRight}
				}
				\Return \node($L$, $P[n/2]$, $R$)\label{line:buildNode}\tcp*[f]{To keep the blocked leaves.}
			}
		}

	\end{algorithm}}

\subsection{\titlecap{\spactree{}} Construction}\label{sec:spac_constr}
We first show the construction algorithm for \spactree{s} in \cref{algo:ptree-constr}.
To use \pactree{} for construction, a simple idea is to first compute the SFC code for each point, sort the points accordingly, and then build a balanced BST tree on the sorted points.
Despite theoretical efficiency, directly calling the \pactree{} in \cpam{} in this way is up to 3$\times$ slower than \pkdtree{} construction.
To improve performance, our main effort is to avoid unnecessary memory reads/writes by
redesigning the sorting algorithm, shown in function ``\sortSfc{}'' \cref{algo:ptree-constr}, with two major improvements.
First, instead of pre-calculating SFC values before sorting, we compute them when the points are first touched in sorting, which saves one round of reads and writes to associated arrays.
Second, we only sort the $\ip{code}{id}$ pairs (line~\ref{line:orth-build-assign}), without the coordinates.
This reduces the memory footprint of the recursive sorting process (thus faster speed), at the cost of more cache misses when fetching points to the leaves.
Overall this reduces the running time.
Combining the two techniques together, \cref{algo:orth-constr} can achieve a consistent speedup over the plain implementation (3.1--3.5$\times$ on 2D data; see \cref{fig:heatmap}).

\begin{algorithm}[t]
	\fontsize{8.5pt}{8pt}\selectfont
	\caption{Parallel Batch Insertion on \ourcpam{s}\label{algo:cpaminsert}}
	\SetKwFor{parForEach}{parallel-foreach}{do}{endfor}
	\SetKwProg{MyFunc}{Function}{}{end}
	\SetKwInOut{Note}{Note}
	\SetKwInOut{Parameter}{Parameter}
	\SetKwFor{pardo}{In Parallel:}{}{}
	\KwIn{A sequence of points $P$ and a \ourcpam{} $T$.}
	\KwOut{A \ourcpam{} with $P$ inserted.}
	\DontPrintSemicolon

	\vspace*{.5em}
	\MyFunc{\upshape{\cpamBatchInsert{$(T,P)$}}}{
		Compute SFC codes for points in $P$, and sort $P$ accordingly. \label{line:insert-code}\label{line:insert-sort}\\
	\tcp*{In practice we use the \sortSfc() from \cref{algo:ptree-constr}}

		\Return \cpamBatchInsertFromSorted{$(T, P)$}\label{line:insertCpamFromSorted}
	}

	\vspace{.5em}
	\MyFunc{\upshape{\cpamBatchInsertFromSorted{$(T, P)$}}}{
	$n\gets |P|$\\
	\lIf{$n=0$}{\Return $T$\label{line:insert-base-case}}
	\If{$T$ is a leaf }{
		\If{$|T| + n \leq \leafwrap$}{
			Append $P$ to $T$, and mark $T$ as unsorted\label{line:cpam_insert_append}\\
			Update the bounding box of $T$\\
			\Return $T$\label{line:insert-leaf-case}
		}
		\lElse{
			\Return \cpamBuild($P\cup T$)\label{line:insert-rebuild}
		}
	}
	$k\gets$ the SFC code associated with the pivot in (root of) $T$\label{line:insert-getkey}\\
	$t\gets$ binary search $k$ in $P$ (based on the code)\label{line:insert-binary-search}\\
	\pardo{}{
	$L\gets$ \cpamBatchInsertFromSorted{$(T_\ell, P[0, t))$}\label{line:insert-left}\\
	$R\gets$ \cpamBatchInsertFromSorted{$(T_r, P[t, n))$}\label{line:insert-right}
	}
	Update the bounding box of $T$ based on those of $L$ and $R$\\
	\Return \join($L, T_p, R$)\label{line:insert-join}
	}

	\vspace{.5em}
	\tcp{Return a balanced tree joining $L$ and $R$ with pivot $k$.}
	\MyFunc{\upshape{\join{$(L,k,R)$}}\tcp*[f]{this function remains the same as in \cite{blelloch2016just,dhulipala2022pac}}}{
		\lIf{$L$ is heavier}{\Return \rightJoin($L,k,R$)\label{line:right-join}}
		\lIf{$R$ is heavier}{\Return \leftJoin($L,k,R$)}
		\Return \node{($L$, $k$, $R$)}\label{line:insert-join-node}
	}

	\vspace{.5em}
	\tcp{Recursively check $L$'s right spine until the sub-tree size balances with $R$. Create a new tree node $R'$ with children the two balanced sub-trees, attach $R'$ to $L$, and re-balance $L$.}
	\MyFunc{\upshape{\rightJoin{$(L,k,R)$}}\tcp*[f]{\leftJoin{} is symmetric}}{
		\If(\tcp*[f]{The split terminates here}){$L$ and $R$ is balanced\label{line:right-join-balance}}{\Return \node{($L,k,R$)}\tcp*[f]{Return a balanced tree}}
		$\langle L_\ell, k', L_r\rangle\gets \expose(L)$\tcp*[f]{Expand $L$ into a tree if it is a leaf}\label{line:right-join-expand}\\
		$R'\gets \rightJoin{(L_r,k,R)}$\tcp*[f]{Recursively split the right sub-tree of $L$}\label{line:right_join_get_return}\\
		$L'\gets\node{(L_\ell,k',R')}$\tcp*[f]{Attach the newly balanced tree $R'$ to $L$}\label{line:right_join_attach}\\
		Re-balance $L'$ by rotation\\
		\Return $L'$
	}

	\vspace{.5em}
	\tcp{Expand $T$ into a tree if it is a leaf, and reorder the points if necessary.}
	\MyFunc{\upshape{\expose{$(T)$\label{line:expose}}}}{
		\If{$T$ is a leaf}{
			Re-order the points if $T$ is marked as unsorted\label{line:reorder-in-expose}\\
			Build a perfect balanced tree $T'$ from the sorted points in $T$\\
			\Return $\{T'_\ell, T'_p, T'_r\}$
		}
		\lElse(\tcp*[f]{Return the tree as is}){
			\Return $\{T_\ell, T_p, T_r\}$
		}
	}

	\vspace{.5em}
	\MyFunc(\tcp*[f]{Maintain the leaf wrapping invariant.}){\upshape{\node{$(T_\ell,k,T_r)$}}}{
		Create a node $T$ with pivot $k$, left sub-tree $T_\ell$ and right sub-tree $T_r$.\\
		$n\gets |T|$\\
		\lIf(\tcp*[f]{Leaf wrapping does not apply}){$n>2\leafwrap$}{\Return $T$}
		\ElseIf(\tcp*[f]{Redistribute points in leaves $T_\ell$ and $T_r$}){$n>\leafwrap$\label{line:redistribute}}{
			Sort points in $T_\ell$ and $T_r$ if they are marked as un-sorted.\label{line:sort-in-node}\\
			Redistribute sub-trees of $T$ into two leaf nodes with size $n/2$.\\
			\Return $T$
		}
		\Else(\tcp*[f]{Tree size is below the leaf wrapping, embed it into one leaf}\label{line:flatten-in-node}){
			Flatten $T$ and create a leaf node wrapping it.\\
			\Return this new leaf node
		}
	}
\end{algorithm}

\subsection{Batch Updates on \titlecap{\spactree{s}}}\label{sec:spac_update}

Our \spactree{} builds upon \pactree{}~\cite{dhulipala2022pac}, a parallel BST using the join-based algorithmic framework~\cite{blelloch2016just}.
The high-level idea is to use and only use the \join{} operation for tree rebalancing,
which takes two subtrees $L$, $R$, and a key $k$ in the middle, and returns a new, balanced tree with $L\cup\{k\}\cup R$.
Our key observation here is that, as a spatial index, the order of the points in a leaf, which in this case is based on Hilbert- or Z-Code, does not facilitate spatial queries---queries on a leaf must scan all points anyway.
Therefore, our goal is to carefully redesign the \join-based{} algorithms, such that we can maintain theoretical efficiency, and adapt them best to the spatial index setting by \emph{relaxing the key order in the leaves}.
In our experiments, such an improvement significantly speeds up the updates without sacrificing query performance.

We show the detailed batch insertion algorithm in the \cref{algo:cpaminsert}.
The algorithm begins with computing the SFC code and sorting the inputs.
After sieving points to the leaves,
the algorithm either appends points to the leaf and marks it as unsorted or rebuilds the leaf if its size exceeds the threshold (line~\ref{line:insert-leaf-case} and line~\ref{line:insert-rebuild}).
Next, the standard \join{} operation combines two subtrees $L$ and $R$, and performs the rebalancing (line~\ref{line:insert-join}).
Without loss of generality, we assume $L$ is heavier than $R$,
and the \rightJoin{} operation is called (line~\ref{line:right-join}). The \rightJoin{} recursively splits the right subtree of $L$ until it is possible to return a balanced tree using $R$ (line~\ref{line:right-join-balance}).
When the split reaches a leaf, we expand the leaf into a tree as in \pactree{s} using the \expose{} operation (line~\ref{line:expose}).
The difference is if the leaf is marked as unsorted, we will sort the points first (line~\ref{line:sort-in-node}).
When the split subtree is balanced with $R$ (line~\ref{line:right-join-balance}), we create a new tree node $R'$ with children the two balanced subtrees (line~\ref{line:right_join_get_return}), and attach $R'$ to $L$ (line~\ref{line:right_join_attach}).
Note that the previous leaf expansion may break the leaf wrapping for affected leaves.
In this case, we restore the leaf wrapping by checking the tree size: either directly flatten it into one leaf if the size fits within (line~\ref{line:flatten-in-node}), or redistribute the points into two leaves if necessary (line~\ref{line:redistribute}). We will sort the points first if leaves are marked as unsorted (line:~\ref{line:sort-in-node}).

Despite \cref{algo:cpaminsert} appearing complicated, we can prove its correctness by showing its equivalence to a \pactree{}.
For page limit, we defer the analysis to \iffullversion{\cref{app:spac-analysis}}\ifconference{the full version paper~\cite{psiPaperFull}}.

The batch deletion algorithm is similar to the insertion.
The only difference is that when it reaches a leaf, it removes the points there, marks the leaf as unsorted if necessary, and updates the bounding box.
The invariant of leaf wrapping is maintained the same way as in insertion, i.e., line~\ref{line:insert-join-node} and ~\ref{line:right_join_attach}.

\subsection{Theoretical Analysis}

Due to the page limit, we defer the full analysis to \iffullversion{\cref{app:spac-analysis}}\ifconference{the full version paper~\cite{psiPaperFull}}, and present only the results here.

\vspace{-.5em}

\begin{theorem}{}\label{thm:spac}
For $n$ points with integer coordinates, a \spactree{} with Hilbert- or Z-curve can be constructed in $O(n\log n)$ work, $O(\log n)$ span, and $O(\mbox{Sort}(n))$ cache complexity.  A batch update (insertion or deletion) of size $m$ on a \spactree{} of size $n$ uses $O(m\log n)$ work and $O(\log^2 n)$ span.
\end{theorem}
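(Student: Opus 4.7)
The plan is to establish both parts by reducing to known results about parallel sample sort and the join-based tree framework, while carefully accounting for the two design choices in \spactree{}: fusing SFC-code computation into the sort, and permitting unsorted leaves.

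For the construction bounds, I would first analyze \sortSfc{}. It is a parallel sample sort~\cite{blelloch2010low, axtmann2017place} with one modification: instead of precomputing SFC codes for every point of $P$ as a separate pass, the code is materialized the first time the point is touched, inside the first distribution round (line~\ref{line:orth-build-assign}). Since SFC encoding is $O(1)$ work per integer point, folding it into an existing pass does not alter the asymptotics of any phase, and only $\ip{code}{id}$ pairs are moved in the recursive rounds. Hence \sortSfc{} retains the standard sample-sort bounds of $O(n\log n)$ work, $O(\log n)$ span, and $O(\mbox{Sort}(n))$ cache complexity. Next, I would analyze \cpamBuildFromSorted{}. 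Because the pairs are already sorted, each recursive call splits its slice at the midpoint, recurses in parallel, and terminates at a leaf when the slice has size at most $\leafwrap$. The recursion tree has $O(n/\leafwrap)$ leaves and depth $O(\log n)$, so node creation together with the bottom-up bounding-box computation takes $O(n)$ work and $O(\log n)$ span, with cache cost dominated by the sort. Summing the two phases gives the claimed construction bounds.

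For batch updates, I would show that \cpamBatchInsertFromSorted{} inherits the join-based cost analysis from \pactree{}~\cite{dhulipala2022pac, sun2018pam, blelloch2016just}. Sorting the $m$-point batch by SFC code costs $O(m\log m)$ work and $O(\log m)$ span via the same modified sort. The recursive descent then splits the sorted batch at the pivot of $T$ via binary search and recurses on the two halves in parallel, each followed by \join{}. The standard weight-balanced analysis gives $O(m\log(n/m+1)) \subseteq O(m\log n)$ work and $O(\log n \log m) \subseteq O(\log^2 n)$ span.

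The delicate part is verifying that the unsorted-leaf relaxation does not degrade these bounds, and this is where I expect the main obstacle to lie. My argument would run as follows: every leaf has size at most $\leafwrap = O(1)$, so each invocation of \expose{} (line~\ref{line:reorder-in-expose}) or \node{} (line~\ref{line:sort-in-node}) that encounters an unsorted leaf sorts it in $O(1)$ work and $O(1)$ span. These deferred sorts can be charged to leaves on the batch's recursion path, of which at most $O(m + \log n)$ are visited per batch, so the extra work is absorbed into $O(m\log n)$. On the span side, each recursion level sorts only a constant number of constant-size leaves, adding $O(1)$ per level and $O(\log n)$ overall, dominated by $O(\log^2 n)$. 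To close the argument, I would prove by induction on the tree that \expose{} preserves the invariant relied on by the join-based framework: the fragment $\{T'_\ell, T'_p, T'_r\}$ it returns is a balanced binary tree that is key-consistent with the SFC ordering. This holds because sorting the leaf in place yields the same multiset and the same sorted sequence as an eagerly maintained \pactree{} leaf, so every subsequent \join{} sees exactly the shape and key order that the \pactree{} analysis assumes, and the weight-balance rotations along the spine are unaffected.
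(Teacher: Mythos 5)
Your proposal is correct and follows essentially the same route as the paper's proof: it reduces \sortSfc{} to the standard sample-sort bounds (noting the fused SFC-code computation adds no asymptotic cost), treats \cpamBuildFromSorted{} as an $O(n)$-work, $O(\log n)$-span divide-and-conquer, and inherits the $O(m\log(n/m+1))$-work, $O(\log m\log n)$-span join-based analysis from \pactree{}, with the key observation that re-sorting an unsorted leaf costs $O(1)$ because $\leafwrap$ is constant. Your additional induction showing \expose{} restores the exact \pactree{} state corresponds to the paper's separate correctness argument (equivalence to an eagerly sorted \pactree{}), so nothing is missing.
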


\vspace{-.5em}

\hide{
\begin{table*}[htbp]
	\small
	\centering
	\setlength\tabcolsep{2.5pt} 
	\renewcommand{\arraystretch}{1.0} 

	%
	\begin{tabular}{cc|cccc|ccccccc|ccccccc}
		\toprule
		\multirow{2}[2]{*}{}                                        & \multirow{2}[2]{*}{\textbf{Solver}} & \multicolumn{4}{c|}{\textbf{Build}} & \multicolumn{7}{c|}{\textbf{Incremental Insert}} & \multicolumn{7}{c}{\textbf{Incremental Delete}}                                                                                                                                                                                                                                                                                              \\
		                                                            &                                     & \textbf{Time}                       & \textbf{IDS}                                     & \textbf{ODS}                                    & \textbf{RR}      & \textbf{10\%}    & \textbf{1\%}     & \textbf{0.1\%}   & \textbf{0.01\%}  & \textbf{IDS}     & \textbf{ODS}     & \textbf{RR}      & \textbf{10\%}    & \textbf{1\%}     & \textbf{0.1\%}   & \textbf{0.01\%}  & \textbf{IDS}     & \textbf{ODS}     & \textbf{RR}      \\
		\midrule
		\multirow{8}[2]{*}{\begin{sideways}Sweepline\end{sideways}} & \ourorth{}                          & 4.63                                & \underline{.959}                                 & \underline{1.64}                                & 1.22             & 5.29             & 5.67             & 5.72             & 9.40             & \underline{.960} & \underline{2.30} & \underline{1.01} & \underline{1.92} & 3.08             & 4.25             & 8.74             & \underline{.963} & 2.27             & \underline{1.02} \\
		                                                            & \pkd{}                              & 5.16                                & .992                                             & 2.09                                            & \underline{1.11} & 13.7             & 24.0             & 35.8             & 39.9             & 1.41             & 8.54             & 1.09             & 10.0             & 24.3             & 36.8             & 62.8             & 1.03             & \underline{1.57} & 1.07             \\
		                                                            & \zd{}                               & 8.30                                & 1.15                                             & 3.53                                            & 1.59             & 5.94             & 4.27             & 6.85             & 13.4             & 1.18             & 3.77             & 1.40             & 5.80             & 4.95             & 12.0             & 27.8             & 1.18             & 3.74             & 1.40             \\
		                                                            & \ourcpamh{}                         & 3.32                                & 2.57                                             & 2.56                                            & 1.19             & 2.98             & 3.09             & 4.12             & 8.85             & 3.67             & 2.83             & 1.29             & 2.32             & 2.50             & 3.23             & \underline{8.37} & 2.74             & 2.72             & 1.26             \\
		                                                            & \ourcpamz{}                         & \underline{3.04}                    & 4.75                                             & 3.64                                            & 1.23             & \underline{2.77} & \underline{2.83} & \underline{3.83} & \underline{8.74} & 4.33             & 4.16             & 1.32             & 2.16             & \underline{2.27} & \underline{3.01} & 9.09             & 4.94             & 4.09             & 1.30             \\
		                                                            & \cpamh{}                            & 10.3                                & 3.05                                             & 2.82                                            & 1.43             & 13.5             & 10.2             & 8.29             & 19.4             & 3.34             & 2.69             & 1.48             & 15.0             & 11.1             & 8.67             & 21.2             & 3.34             & 3.00             & 1.48             \\
		                                                            & \cpamz{}                            & 9.81                                & 5.95                                             & 4.11                                            & 1.47             & 13.1             & 9.96             & 8.04             & 20.4             & 5.64             & 4.47             & 1.53             & 14.4             & 10.7             & 8.52             & 22.7             & 6.36             & 4.50             & 1.54             \\
		                                                            & \boost{}                            & 2812                                & 2.47                                             & 2.11                                            & 4.22             & NA               & NA               & NA               & 1156             & 6.70             & 6.11             & 6.10             & NA               & NA               & NA               & 438              & 2.01             & 1.73             & 4.14             \\
		\midrule
		\multirow{8}[2]{*}{\begin{sideways}Varden\end{sideways}}    & \ourorth{}                          & 12.2                                & .575                                             & \underline{1.05}                                & 1.12             & 13.7             & 11.5             & 12.6             & 26.2             & .568             & \underline{1.20} & \underline{1.01} & 6.79             & 8.48             & 12.5             & 28.3             & .568             & \underline{1.18} & \underline{1.01} \\
		                                                            & \pkd{}                              & 6.10                                & \underline{.524}                                 & 5.50                                            & \underline{1.07} & 12.8             & 25.2             & 32.9             & 53.6             & \underline{.530} & 6.01             & 1.03             & 9.36             & 18.9             & 28.8             & 51.6             & \underline{.536} & 6.50             & 1.04             \\
		                                                            & \zd{}                               & 8.57                                & .825                                             & 6.10                                            & 1.54             & 6.01             & 4.34             & 6.89             & 15.0             & .834             & 6.06             & 1.34             & 5.76             & 5.44             & 16.5             & 41.0             & .840             & 6.16             & 1.36             \\
		                                                            & \ourcpamh{}                         & 3.09                                & 13.3                                             & 3.58                                            & 1.15             & 2.71             & 2.95             & 4.18             & 9.23             & 15.1             & 3.08             & 1.23             & 2.22             & 2.54             & 3.54             & 8.21             & 14.1             & 3.69             & 1.22             \\
		                                                            & \ourcpamz{}                         & \underline{2.94}                    & 23.5                                             & 7.46                                            & 1.19             & \underline{2.52} & \underline{2.68} & \underline{3.77} & \underline{8.74} & 20.9             & 5.30             & 1.27             & \underline{2.03} & \underline{2.27} & \underline{3.26} & \underline{7.80} & 23.9             & 17.1             & 1.26             \\
		                                                            & \cpamh{}                            & 10.0                                & 14.5                                             & 3.86                                            & 1.38             & 13.3             & 10.2             & 8.59             & 19.0             & 15.8             & 3.36             & 1.45             & 14.9             & 11.2             & 9.12             & 19.7             & 15.1             & 3.99             & 1.43             \\
		                                                            & \cpamz{}                            & 9.66                                & 25.5                                             & 8.02                                            & 1.42             & 13.0             & 9.94             & 8.24             & 18.5             & 19.2             & 5.54             & 1.49             & 14.0             & 10.8             & 8.75             & 20.2             & 26.3             & 19.2             & 1.48             \\
		                                                            & \boost{}                            & 870                                 & 2.43                                             & 2.28                                            & 4.21             & NA               & NA               & NA               & 407              & 9.59             & 8.13             & 4.45             & NA               & NA               & NA               & 318              & 2.11             & 1.72             & 4.20             \\
		\midrule
		\multirow{8}[2]{*}{\begin{sideways}Uniform\end{sideways}}   & \ourorth{}                          & 3.23                                & \underline{1.22}                                 & \underline{1.20}                                & 1.14             & 4.27             & 10.3             & 19.7             & 29.7             & \underline{1.27} & \underline{1.25} & \underline{1.04} & \underline{4.32} & \underline{10.2} & \underline{19.6} & \underline{29.9} & 1.51             & 1.49             & 1.18             \\
		                                                            & \pkd{}                              & 3.66                                & 1.30                                             & 1.28                                            & \underline{1.08} & 4.52             & 10.6             & 20.9             & 48.9             & 1.38             & 1.33             & 1.05             & 4.37             & 11.2             & 23.0             & 59.8             & \underline{1.42} & \underline{1.40} & \underline{1.07} \\
		                                                            & \zd{}                               & 7.84                                & 1.45                                             & 1.41                                            & 1.56             & 7.47             & 11.5             & 21.0             & 39.4             & 1.55             & 1.53             & 1.44             & 8.12             & 13.2             & 23.4             & 39.8             & 1.49             & 1.47             & 1.39             \\
		                                                            & \ourcpamh{}                         & 3.34                                & 4.66                                             & 4.63                                            & 1.17             & 3.90             & 6.00             & 13.0             & 26.8             & 4.66             & 4.66             & 1.25             & 4.69             & 14.4             & 27.7             & 42.2             & 4.63             & 4.64             & 1.24             \\
		                                                            & \ourcpamz{}                         & \underline{3.10}                    & 22.1                                             & 22.2                                            & 1.22             & \underline{3.68} & \underline{5.75} & \underline{12.6} & \underline{26.2} & 23.1             & 23.2             & 1.31             & 4.78             & 14.2             & 27.4             & 42.0             & 22.6             & 22.6             & 1.30             \\
		                                                            & \cpamh{}                            & 11.3                                & 5.91                                             & 5.87                                            & 1.39             & 15.3             & 38.2             & 108              & 159              & 5.85             & 5.82             & 1.46             & 17.2             & 39.0             & 107              & 157              & 5.84             & 5.81             & 1.46             \\
		                                                            & \cpamz{}                            & 10.8                                & 27.7                                             & 27.8                                            & 1.45             & 14.7             & 37.6             & 108              & 158              & 28.6             & 28.7             & 1.54             & 16.7             & 38.4             & 106              & 157              & 27.8             & 27.9             & 1.52             \\
		                                                            & \boost{}                            & 2269                                & 14.2                                             & 11.6                                            & 5.48             & NA               & NA               & NA               & 908              & 26.1             & 25.2             & 7.81             & NA               & NA               & NA               & 961              & 4.93             & 4.46             & 4.60             \\
		\bottomrule
	\end{tabular}%

	\vspace{.2em}
	\caption{\textbf{Running time (in seconds) for and other baselines. Lower is better.} \ziyang{todo}
		\ziyang{add special mark for boost in updates}
		\vspace{-0.5em}
	}
	\label{table:summary}
\end{table*}}

\begin{figure*}
  \includegraphics[width=\textwidth]{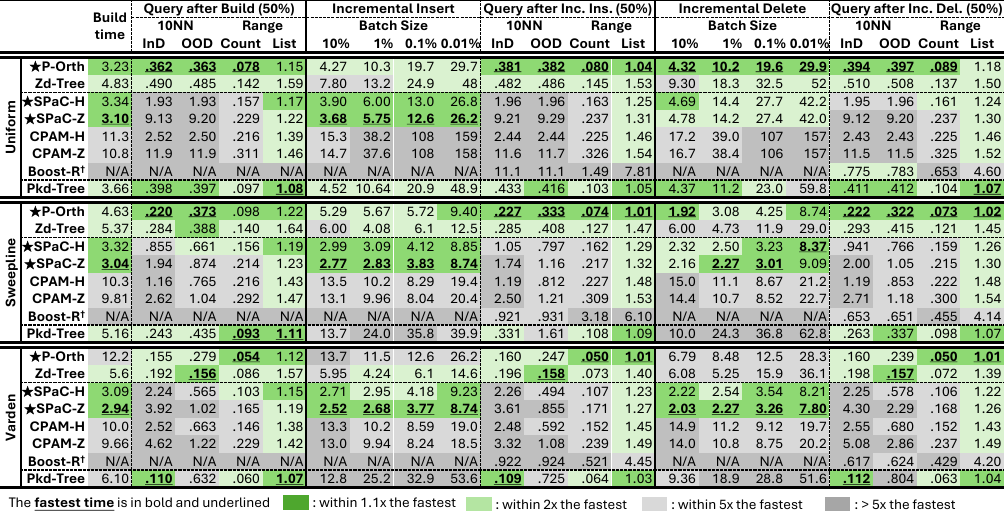}
  \caption{{\bf Running time (in seconds) on synthetic data. Lower is better.} \normalfont 
  The fastest time in each test is in bold and underlined. 
  We use colors to mark results within 1.1$\times$, 2$\times$, 5$\times$, and $>5\times$ the fastest time. 
  Detailed settings for build, queries, and incremental insertion/deletion are introduced at the beginning of \cref{sec:exp-synthetic}.
  \ids{}/\ods: in-/out-of-distribution.   
  $\dagger$: Boost R-tree is sequential and only support point updates. 
  Therefore, we omit the construction/update times, and report query times after incremental inserting/deleting points one by one. 
  }\label{fig:heatmap}
  \vspace{-0.2em}
\end{figure*}

\section{Experiments\label{sec:exp}}
We conduct in-depth experiments to understand the performance of \ourlib{} and other spatial indexes on both synthetic and real-world datasets.
We show that both \ourorth{s} and \ourcpam{s} achieve superior construction and update performance,
outperforming \pkdtree{} in most cases, and are much faster than existing \orth{} and R-tree baselines.
Both \ourorth{s} and \ourcpam{s} also exhibit comparable or better query performance to their corresponding counterparts in prior work.
In addition to showing the effectiveness of our new algorithms, we believe our experiments also provide the first systematic study of various
parallel spatial indexes, including \kdtree{s}, \orth{s}, and R-trees.

\myparagraph{Setup.}
We use a machine with 112 cores (224 hyperthreads) with four Intel Xeon Platinum 8176 CPUs and 1.47 TB RAM.
\ourlib{} is in C++ and compiled using GCC 14.2.1 with \texttt{-O3}.
We use the ParLaylib~\cite{blelloch2020parlaylib} for fork-join parallelism.
Our code is available at~\cite{psiCode}.
We report numbers as the average of 3 runs after a warm-up run. 
More details about parameter choosing are shown in \iffullversion{\cref{app:impl}}\ifconference{the full version paper~\cite{psiPaperFull}}.

\myparagraph{Baselines.} We compare to the following baselines.
\begin{itemize}
  \item \textbf{\pkd{s}}~\cite{men2025parallel}: The state-of-the-art parallel \kdtree{}.
  \item \textbf{\zdtree{s}}~\cite{blelloch2022parallel}: The state-of-the-art parallel \orth{}.
    \zdtree{} uses Morton code to presort the data to aid the construction and update algorithm in a standard \orth{}.
    The original code from~\cite{blelloch2022parallel} has known bugs in the update algorithms (confirmed by the authors).
    We use our own implementation based on their paper.  We have carefully verified that our construction time is similar to their code. 

  \item \textbf{\cpam{}}~\cite{dhulipala2022pac}: 
    As a baseline, we use \pactree{s} from the CPAM library (as a black box)
    to store each point's SFC code as the key. It preserves a total order of
    all points based on the Morton curve (\cpamz{}) or the Hilbert curve
    (\cpamh{}). This baseline highlights how our new design by maintaining
    only a partial order improves performance.

  \item \textbf{\boost{} R-trees}~\cite{schaling2011boost}: The \rtree{} from the Boost library.
    Boost R-tree is sequential, and only supports point updates (no batch updates).
    We mainly use it as a baseline to verify the query performance for our \ourcpam{s}.
    Hence, among all the variants, we use the \texttt{quadratic} version, which gives the best tree quality in the dynamic setting.
\end{itemize}

Within \ourlib{}, we tested the parallel \orth{}---the \porth{}---as introduced in \cref{sec:orth_tree},
and two R-trees---\hilberttree{} and \mortontree{}---which use Hilbert and Morton curve, respectively, on the \cpamtree{} detailed in \cref{sec:spac}.
We maintain bounding boxes for all tested indexes. We refer readers to \ifconference{the full version paper~\cite{psiPaperFull}}\iffullversion{the~\cref{app:impl}} for more implementation details.




\subsection{Overall Evaluation under Synthetic Datasets\label{sec:exp-synthetic}}
\myparagraph{Setup.}
We test different distributions for points, queries, and update patterns of synthetic data.
All coordinates are 64-bit integers in $[0, 10^9]$.
We use three workloads: \uniform{}, \sweep{} and \varden{}.
\uniform{} draws each point uniformly random from the space.
\sweep{} also uses uniform data, but sorts all points along the first dimension.
This is used to simulate a skewed update pattern, where the updated points exhibit spatial locality.
\varden{}~\cite{gan2017hardness} is generated by randomly walking in the space with a low probability to restart at a random position.
Points are clustered and different clusters are far from each other, simulating a skewed point distribution.

We test both static and dynamic cases.
Besides directly measuring the tree construction time, 
we also use the \emph{incremental insertion/deletion} workload with various batch sizes to simulate a highly dynamic scenario.
For batch size $b$, an incremental insertion workload means to construct the index by $n/b$ batch insertions progressively,
and vice versa for deletions (deleting the index in $n/b$ batches).
We report the total running time of all operations.
This reflects how the \emph{update efficiency} of each index is affected under a constantly evolving dataset.
Under this workload, we further time the queries after half of the batches.
The query performance reflects how the \emph{quality} of each index is affected after massive updates.
For the static setting, we also provide query times after building a tree with half of the data for easy comparison with the dynamic setting.
We also test the update time for a single batch, and show the results in \ifconference{the full version paper~\cite{psiPaperFull}}\iffullversion{the~\cref{app:batch-updates}}.


We tested \knn{} and range queries (introduced in \cref{sec:prelim}).
We run $10^7$ 10-NN queries for both in-distribution (\ids) and out-of-distribution (\ods) queries.
For range queries, we test $5\times 10^4$ range-count and range-list queries, with range sizes $10^4$--$10^6$.
Different queries run in parallel. 
Besides \cref{fig:heatmap}, we further study how \knn{} and range-list performance changes with their output sizes
and show results in \cref{fig:knn,fig:rr}.

We summarize in \cref{fig:heatmap} the performance of all tested indexes on synthetic datasets with $10^9$ 2D points.
We provide the results on 3D points in \iffullversion{\cref{app:3d-syn}}\ifconference{the full version paper~\cite{psiPaperFull}}.
Next we analyze the performance in detail.

\subsubsection{Construction}\label{sec:exp:construction}
For tree construction, our \cpamtree{} is the fastest among all indexes across all workloads.
The advantage comes from embedding 2D data into 1D that simplifies the computation,
and various optimizations in \ourlib{} introduced in \cref{sec:spac}.
\mortontree{} is slightly faster than \hilberttree{}, since Morton code has simpler computation than Hilbert code.
The baselines \cpamh{} and \cpamz{} are about $3\times$ slower than our \cpamtree{s}, due to the overhead in maintaining the ordering in leaves.
This effect is even more significant in batch updates and queries.
This justifies the necessity of our technique of relaxing the ordering in leaves.

\hide{
  For \orth{s}, on \uniform{} and \sweep{}, \porth{} also achieves good performance (within 52\% slower than the fastest \mortontree{}),
  and is faster than all other baselines.
  On \varden{}, \porth{} becomes slower than others.
  Since the \orth{} split the space using the coordinate median, it is naturally not resistant to skewed data,
  and is most affected by the skewed distribution among all indexes.
  Although \zdtree{} is also a \orth{}, it achieves reasonable performance---
  the main cost for \zdtree{} construction is to sort all points in Morton order,
  and this is done by a comparison sort in our implementation.
  All other indexes are comparison-based, and the effect of skewed data to them is minimal in construction time.
  In summary, \porth{} has the best or close to the best construction performance in non-skewed data, but exhibits a disadvantage on skewed data.

  Compared to \porth{}, \pkdtree{} has similar but usually slower performance (except for the skewed \varden{} data).
  This is because on uniform data, the object median and coordinate median are close to each other, and thus the chosen splitters of \porth{} and \pkdtree{s} are similar.
  In this case, \porth{}'s \quadtree{} structure allows for shallower tree height and better locality than the binary \kdtree{}.
  In addition, determining the splitter at each tree node in a \porth{} (computing the middle of the coordinate range)
  is also simpler than \kdtree{} (estimating the median among all points).
  As a result, \pkdtree{} is slower than \porth{} on uniform data.
  In all cases, \pkdtree{} is much slower than our \cpamtree{s} (1.18--2.07x slower) in construction.
}

For \orth{s}, on \uniform{} and \sweep{}, the \porth{} also achieves good performance (within 52\% slower than the fastest \mortontree{}),
and is faster than all other baselines.
The advantage of \porth{s} over \pkdtree{s} come from two aspects:
1) as a \quadtree{}, \porth{} allows for shallower tree height and better locality than the binary \kdtree{},
and 2) determining the splitter at each node in a \porth{} (computing the middle of the coordinate range)
is simpler than \kdtree{} (estimating the median among all points).

On \varden{}, \porth{} becomes slower than others.
Since \orth{s} split the space using the coordinate median, it is naturally not resistant to skewed data,
and is most affected by the skewed distribution.
Although \zdtree{} is also a \orth{}, it achieves reasonable performance---
the main cost for \zdtree{} construction is to sort all points in Morton order,
and this is done by a comparison sort in our implementation.
All other indexes are comparison-based, and the effect of skewed data on them is minimal in construction time.

In summary, \cpamtree{s} have consistently better performance than all other baselines in construction.
\porth{} is also competitive on non-skewed data, but exhibits a disadvantage on skewed data.

\subsubsection{Incremental Batch Updates}
The conclusions for batch updates are very similar to those of construction. 
\cpamtree{s} has the best overall performance, and \mortontree{} has a slight advantage over \hilberttree{}.
\mortontree{} is the fastest in all incremental insertions, and most cases in incremental deletions.
For the same reason analyzed in \cref{sec:exp:construction}, \ourorth{s} are less ideal for \varden{} data.
In all other cases, \ourorth{s} are either the best or close to the best.

For all indexes, the incremental update time increases when the batch size decreases.
On the one hand, smaller batches result in less potential for parallelism.
On the other hand, having more batches also means more modifications to the tree, requiring more effort to rebalance the tree and leaving the tree further from being perfectly balanced.
The only index that avoids rebalancing is the \orth{}, and its performance with continuous updates is the least affected by the batch size.

On highly dynamic data, \pkdtree{s} are less competitive in update time compared to \porth{s} and \cpamtree{s}.
One essential reason is that \pkdtree{} has $O(\log^2 n)$ amortized cost per updated point, while \porth{} and \cpamtree{} have cost of $O(\log \Delta)$ and $O(\log n)$, respectively, where $\Delta$ is the aspect ratio.
Hence, both \porth{s} and \cpamtree{s} are faster than \pkdtree{s} in updates.
In particular, \porth{s} are up to 7.18$\times$ faster than \pkdtree{} in incremental updates, and \cpamtree{} can be up to 7.5$\times$ faster.
Even for \varden{} where $\Delta$ is relatively large, \porth{s} are almost always faster than \pkdtree{s} in incremental updates.


\subsubsection{Queries}
We run queries in three settings: 1) after constructing a tree of size $5\times 10^8$, 2) after applying 50\% of the insertion batches, and 3) after applying 50\% of the deletion batches.
Most indexes are nearly perfectly balanced after construction, and thus the first setting reflects their best-case (static) query performance.
The other two settings reflect how the index quality is affected by updates.
In \cref{fig:heatmap}, we only select results for 10-NN query and a relatively large range query.
To give more details, in \cref{fig:knn} and \cref{fig:rr}, we further show how query performance changes with the output size, i.e., $k$ in \knn{},
and the range size in range-list queries.

\begin{figure}[t]
	\includegraphics[width=0.48\textwidth]{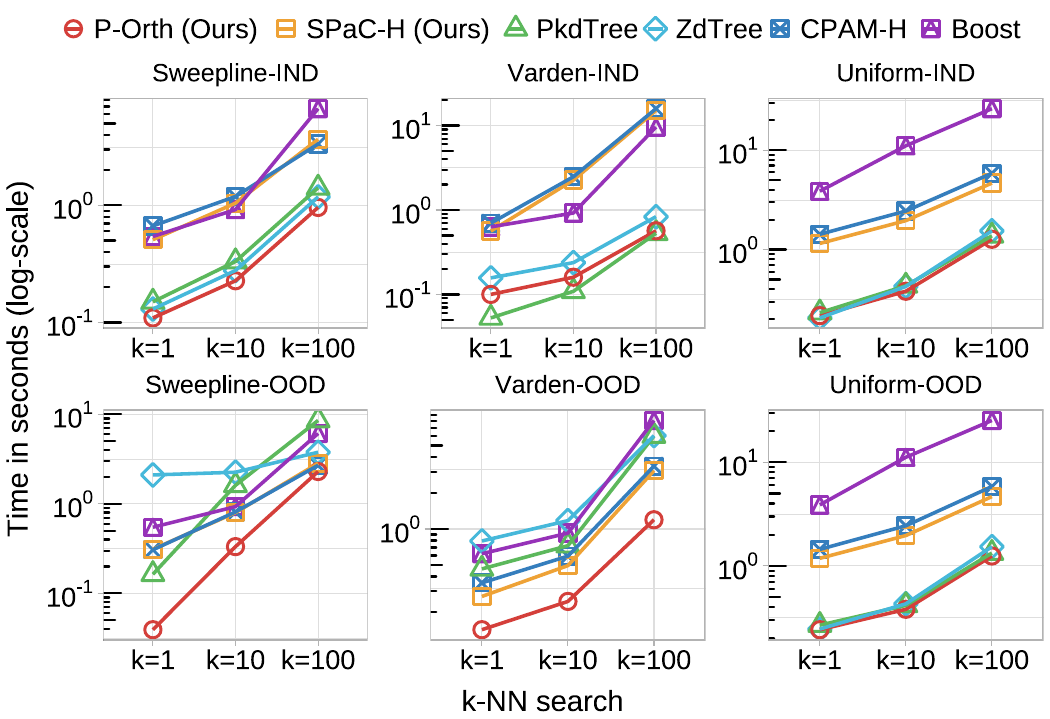}
	\caption{\textbf{Running time (in seconds) of $k$-NN queries for $k\in\{1,10,100\}$. Lower is better.}
		The dataset contains 500M points in 2 dimensions.
		The tree is constructed by incremental insertion with batch ratio $0.01\%$.
		The test contains $k$-NN queries from $10^7$ points from both \ids{} and \ods{} distribution.
		Plots are in log-log scale.
	}
	\label{fig:knn}
	\includegraphics[width=0.48\textwidth]{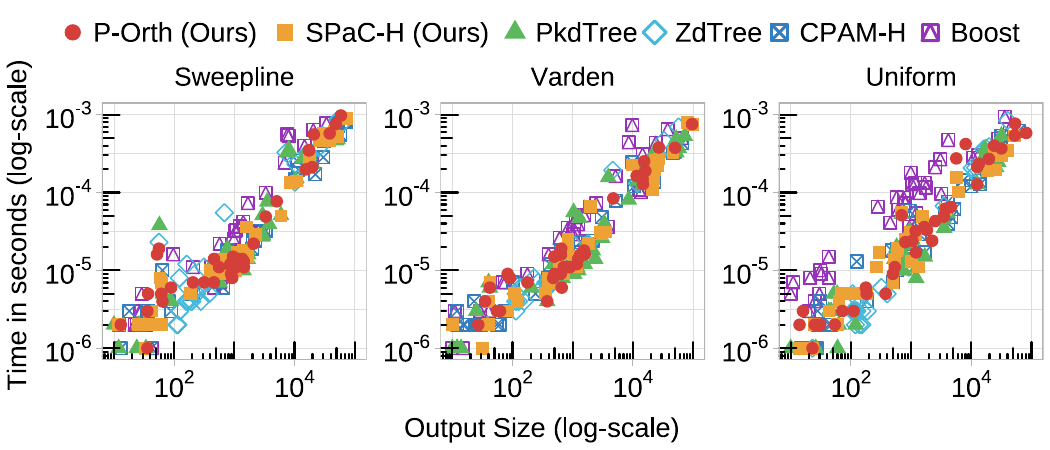}
	\caption{\textbf{Running time (in seconds) of range report queries for w.r.t output sizes. Lower is better.}
		The dataset contains 500M points in 2 dimensions.
		The tree is constructed by incremental insertion with batch ratio $0.01\%$.
		Plots are in log-log scale.
	}
	\label{fig:rr}
\end{figure}

\myparagraph{\knn{} Queries.}
As shown in \cref{fig:knn}, space-partitioning trees are evidently faster than R-trees in \knn{} queries.
This is natural due to overlapping bounding boxes in R-trees.
For \ourcpam{s}, while the \hilberttree{} is slightly slower than \mortontree{} in construction and updates, it is much more efficient in queries.
This is because the Hilbert curve has better locality than the Morton curve (adjacent codes are always geometrically close to each other).
Among the R-trees, \ourcpam{s} achieve similar or better performance than Boost R-tree---in all queries,
\hilberttree{} is between 3.7$\times$ slower to 5.66$\times$ faster,
with a geometric mean of 2.5$\times$ faster.

Among the space-partitioning trees, \orthtree{s} has the best overall performance.
This is because when visiting a subtree, the \ourorth{} can select 1 out of 4 quadrants,
which is more effective than \pkd{s} and \zd{s} that select 1 out of 2 half spaces.
Hence, \pkdtree{s} and \zdtree{s} are competitive but usually slower than \porth{s}.
The only exception is on \varden{} data.
For \ids{} queries, due to the skewed distribution of \varden{}, \orthtree{s} may be unbalanced,
and thus the comparison-based \pkdtree{s} perform better.
Interestingly, on the contrary, both \orthtree{s} exhibit an advantage on \ods{} queries on \varden{}.
The reason is still in imbalance---for \varden{}, points are highly clustered, making these regions in the tree deep and other regions shallow.
Since the \ods{} queries distribute differently from the input, they likely hit the shallow regions and thus are much faster.

\myparagraph{Range Queries.} As shown in \cref{fig:heatmap,fig:rr},
\pkdtree{s} show a small but consistent advantage on range queries.
This is because a range query visits all subtrees overlapping the query box.
In this case, \ourorth{s} have to explicitly check the bounding boxes for four subtrees,
while every non-overlapping check on a \pkd{} node can prune half of the points in this subtree.
For other indexes, the relative performance on range queries is similar to \knn{} queries.
Interestingly, while \ourcpam{s} are still slower than \kdtree{s} and \porth{s} in range-list queries, the difference is much smaller, especially on large ranges---
in this case, the query time is mostly spent emitting the result list, hiding the difference in pruning effectiveness across indexes.
Therefore, range queries are less sensitive to the index type than \knn{} queries.

\myparagraph{Impact of Updates to Queries.}
In the dynamic setting, the \orthtree{s} (\porth{} and \zdtree{}) are history-independent (modulo leaf-wrapping), namely,
the final state of the tree is not affected by the operation order.
Therefore, their query performance is least affected by batch updates, and is the best in the dynamic setting.

For all other indexes, the tree may get less balanced after updates.
Indeed, they all get slower to some extend compared to the static setting.
This impact is moderate for most indexes (mostly within 20\%).
The exceptions all appear in \ods{} \knn{} queries, where
\pkdtree{} gets 3.7$\times$ slower after incremental insertion on \sweep{},
and \cpamz{} and \mortontree{} get about 2.5$\times$ slower after incremental deletion on \varden{}.


In summary, for queries,
\orth{s} and \kdtree{s} are naturally better than R-trees.
\kdtree{s} are better in dealing with \ids{} queries on non-uniform data, but may be worse in \ods{} queries.
\ourorth{} has the best or close to the best query performance in almost all queries and workloads.

\subsection{Operations on Real-World Datasets\label{sec:exp-real-world}}

\hide{
\begin{table*}[t]
	\centering

	\small
	\setlength\tabcolsep{2pt}
	\renewcommand{\arraystretch}{0.8}

	\begin{tabular}{cc|ccccc|ccccc|ccccc}
		\toprule
		                                                          &                 & \multicolumn{5}{c|}{\textbf{Build}} & \multicolumn{5}{c|}{\textbf{Insert}} & \multicolumn{5}{c}{\textbf{Delete}}                                                                                                                                                                                       \\
		                                                          & \textbf{Solver} & \textbf{Time}                       & \textbf{IDS}                         & \textbf{ODS}                        & \textbf{RCL} & \textbf{RRL} & \textbf{Time} & \textbf{IDS} & \textbf{ODS} & \textbf{RCL} & \textbf{RRL} & \textbf{Time} & \textbf{IDS} & \textbf{ODS} & \textbf{RCL} & \textbf{RRL} \\
		\midrule
		\multirow{8}[2]{*}{\begin{sideways}GeoLife\end{sideways}} & OrthTree        & .0064                               & .0007                                & .0010                               & .057         & .447         & .079          & .0006        & .0009        & .047         & .432         & .089          & .0009        & .0009        & .056         & .436         \\
		                                                          & KdTree          & .0053                               & .0003                                & .0003                               & .056         & .431         & .131          & .0002        & .0003        & .110         & .436         & .147          & .0003        & .0003        & .058         & .433         \\
		                                                          & ZdTree          & -                                   & -                                    & -                                   & -            & -            & -             & -            & -            & -            & -            & -             & -            & -            & -            & -            \\
		                                                          & PTree-H         & .0015                               & .0010                                & .0009                               & .103         & .537         & .240          & .0010        & .0008        & .108         & .540         & .137          & .0013        & .0009        & .110         & .541         \\
		                                                          & PTree-Z         & .0012                               & .0008                                & .0011                               & .122         & .539         & .205          & .0010        & .0013        & .130         & .543         & .126          & .0009        & .0012        & .131         & .541         \\
		                                                          & CPAM-H          & .0036                               & .0012                                & .0009                               & .114         & .544         & .480          & .0011        & .0009        & .122         & .538         & .481          & .0011        & .0010        & .122         & .541         \\
		                                                          & CPAM-Z          & .0032                               & .0010                                & .0012                               & .137         & .542         & .449          & .0010        & .0013        & .147         & .542         & .453          & .0011        & .0014        & .147         & .547         \\
		                                                          & Boost           & .213                                & .0008                                & .0008                               & .116         & 7.83         & .087          & .0006        & .0004        & .084         & 8.05         & .064          & .0005        & .0003        & .116         & 7.77         \\
		\midrule
		\multirow{8}[2]{*}{\begin{sideways}Cosmo50\end{sideways}} & OrthTree        & 1.90                                & .417                                 & .554                                & .566         & 1.74         & 16.2          & .431         & .582         & .568         & 1.66         & 17.9          & .463         & .617         & .616         & 1.75         \\
		                                                          & KdTree          & 1.89                                & .364                                 & .378                                & .602         & 1.70         & 101           & .388         & .373         & .626         & 1.67         & 800           & .402         & .539         & .620         & 1.67         \\
		                                                          & ZdTree          & -                                   & -                                    & -                                   & -            & -            & -             & -            & -            & -            & -            & -             & -            & -            & -            & -            \\
		                                                          & PTree-H         & 1.02                                & 1.28                                 & 1.89                                & .764         & 1.76         & 6.59          & 1.74         & 2.22         & .840         & 1.95         & 9.40          & 1.64         & 1.91         & .837         & 1.94         \\
		                                                          & PTree-Z         & .837                                & 8.90                                 & 11.9                                & .980         & 1.93         & 5.63          & 9.38         & 11.7         & 1.06         & 2.12         & 8.46          & 9.37         & 13.5         & 1.06         & 2.12         \\
		                                                          & CPAM-H          & 5.48                                & 1.52                                 & 2.56                                & 1.04         & 2.13         & 19.6          & 1.86         & 2.71         & 1.06         & 2.22         & 19.8          & 1.93         & 2.44         & 1.06         & 2.23         \\
		                                                          & CPAM-Z          & 5.38                                & 14.5                                 & 17.6                                & 1.30         & 2.33         & 19.0          & 13.0         & 14.0         & 1.33         & 2.44         & 19.2          & 13.1         & 18.3         & 1.33         & 2.44         \\
		                                                          & Boost           & 414                                 & 4.77                                 & 4.04                                & .977         & 19.3         & 193           & 6.45         & 6.42         & 1.58         & 19.9         & 1483          & 10.5         & 9.05         & 1.21         & 15.0         \\
		\midrule
		\multirow{8}[2]{*}{\begin{sideways}OSM\end{sideways}}     & OrthTree        & 4.96                                & .388                                 & .442                                & .050         & 1.13         & 14.5          & .381         & .439         & .045         & 1.04         & 14.9          & .384         & .452         & .046         & 1.06         \\
		                                                          & KdTree          & 4.32                                & .378                                 & .499                                & .049         & 1.06         & 29.3          & .387         & .566         & .054         & 1.01         & 26.3          & .384         & .520         & .051         & 1.01         \\
		                                                          & ZdTree          & 5.88                                & .572                                 & .880                                & .055         & 1.57         & 16.5          & .566         & .880         & .058         & 1.36         & 23.9          & .572         & .879         & .055         & 1.35         \\
		                                                          & PTree-H         & 2.26                                & 6.13                                 & 6.56                                & .085         & 1.17         & 8.19          & 6.12         & 5.85         & .096         & 1.22         & 7.98          & 5.19         & 6.30         & .088         & 1.24         \\
		                                                          & PTree-Z         & 2.12                                & 17.6                                 & 15.7                                & .131         & 1.19         & 7.91          & 20.2         & 16.5         & .152         & 1.27         & 7.37          & 17.8         & 16.3         & .138         & 1.27         \\
		                                                          & CPAM-H          & 7.26                                & 6.61                                 & 7.11                                & .117         & 1.39         & 15.6          & 5.43         & 5.74         & .129         & 1.42         & 16.7          & 5.50         & 6.59         & .125         & 1.44         \\
		                                                          & CPAM-Z          & 7.01                                & 19.9                                 & 17.1                                & .182         & 1.43         & 15.4          & 20.4         & 16.0         & .199         & 1.47         & 16.6          & 20.0         & 17.9         & .194         & 1.48         \\
		                                                          & Boost           & 666                                 & 9.95                                 & 8.95                                & .435         & 8.35         & 312           & 20.4         & 19.3         & .562         & 10.1         & 264           & 4.69         & 4.06         & .516         & 7.50         \\
		\bottomrule
	\end{tabular}%

	\caption{\textbf{
			Tree construction and $k$-NN time on read-world datasets for and baselines. Lower is better.} The ``Points'' is the number of points in the datasets and ``Dim.'' is the dimension for the points. \knn{} queries are performed in parallel on all points in the dataset. 
		``Range'' is the time for $10^3$ range report queries with output size between $10^4$--$10^6$.
		The fastest runtime for each benchmark is underlined.
		``s.f.'': segmentation fault. ``t.o.'': time out (more than 3 hours).
	}
	\label{table:real-world}%
\end{table*}%
}

\begin{figure}
\includegraphics[width=\columnwidth]{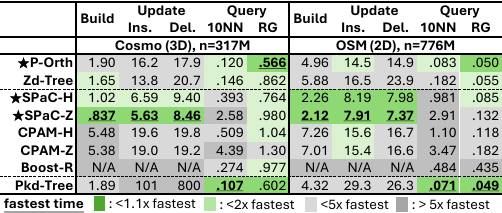}
\caption{\textbf{Running time (in seconds) on real-world datasets. Lower is better.} Insert/Delete: incremental insertion/deletion/ with batch size 0.01\%. 
``RG'': Range-list queries. 
\label{fig:real-heatmap}
}
\end{figure} 
\begin{figure}[t]
	\includegraphics[width=0.48\textwidth]{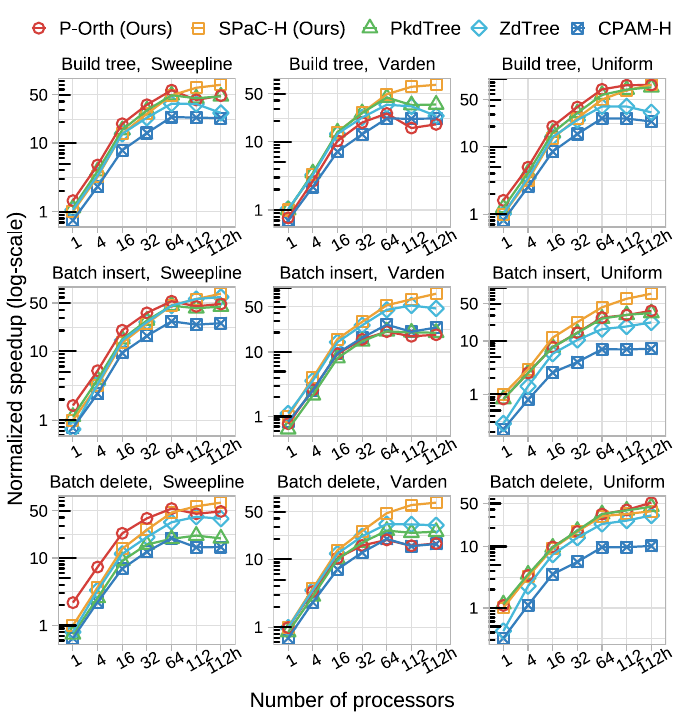}
	\caption{\textbf{Normalized parallel speedup of tree construction, batch insertion and batch deletion on different number of processors. Higher is better.}
		The speedup is normalized to the running time on the \spactree{} on 1 thread respectively.
		The dataset contains 1000M points in 2 dimensions. "Batch insert" is to insert another 1\% points into a tree containing 1000M points, and "Batch delete" is to delete 1\% points from the tree. Both in a single batch.
		"112h" means using all 112 cores (224 hyper-threads).
		There is no result for \boost{} since it is sequential.
	}
	\label{fig:scalability}
\end{figure}

\begin{figure*}
  \includegraphics[width=0.9\textwidth]{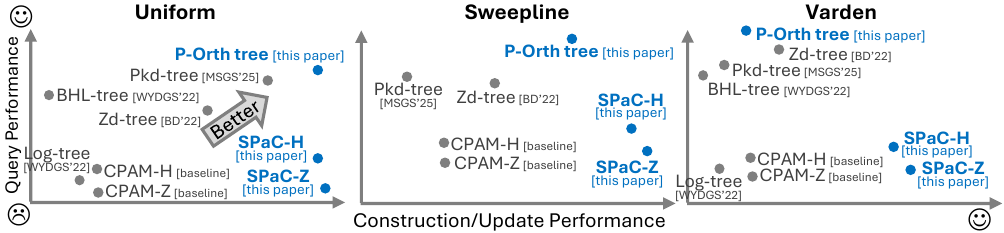}
  \caption{\textbf{Summary of tested index in update and query performance.}
    Results are summarized from numbers in \cref{fig:heatmap}. In particular, the data points are based on the geometric mean of all relevant operations (updates or queries) in \cref{fig:heatmap}.
    Data points for \logtree{} and \bhltree{}~\cite{yesantharao2021parallel} are estimated from the \pkdtree{} paper~\cite{men2025parallel}. Our new algorithms are marked in blue.
  We note that this figure only gives the average of the tested benchmarks in this paper. More comprehensive conclusions can be found in \cref{table:related-work}.\label{fig:summary}}
\end{figure*}

For real-world datasets, we test a highly clustered dataset \texttt{COSMO}~\cite{scoville2007cosmic} and the \texttt{OpenStreetMap} (\osm{}~\cite{haklay2008openstreetmap}) for Northern America.
We test $10^7$ \ids{} 10-NN quiries, and $5\times 10^4$ range-list queries with range size $10^4$--$10^6$.
Coordinates are rounded down to 64-bit integers. We remove duplicates and shift all points to positive coordinates.
To ensure the SFC works properly in 3D, we scale the coordinates to $[0, 10^6]$.
We evaluate construction, incremental updates with batch ratio 0.01\% and queries after construction, and show results in \cref{fig:real-heatmap}.

\cpamtree{s} are much faster than others in construction and updates.
On real-world data, this advantage is more significant than synthetic data.
In particular, they are about 2$\times$ faster than \pkdtree{s} in construction, and 3.5--94$\times$ faster in updates.
\porth{s} have similar construction times to \pkdtree{s}, but are much faster in updates (1.8--44.7$\times$ faster).

On queries, R-trees still perform worse than space-partitioning trees.
In most of the cases, \pkdtree{} achieves the best query performance, but \porth{s} are always competitive---in all cases, the difference in within 20\%. Considering that \porth{s} are 1.8--44.7$\times$ faster in updates, \porth{s} offer a much better query/update tradeoff than \pkdtree{s}.

\hide{\myparagraph{Tree construction}
We measure the construction time on the entire dataset, and then perform queries on a tree directly built with half of the data.
\mortontree{s} and \hilberttree{s} are faster than others in construction for all benchmarks, and the \ourorth{} is slightly slower than \pkdtree{s}, up to x$\times$, but still achieve xx--xx$\times$ speedup than the \zdtree{}.
\pkdtree{s} are generally faster than other baselines, since the object median based splitting yields better prune efficiency in the static case.
\hilberttree{s} are comparable to \ourorth{} on \knn{} queries, but significantly faster than \mortontree{s}, due to the better locality of Hilbert curve.
The performance of range-list queries are similar for all indexes, since the large output size hides the difference in pruning efficiency, so that the \hilberttree{s} and \mortontree{s} are more competitive with \pkdtree{s} and \porth{}.in this case.

\myparagraph{Dynamic updates}
We measure the time for incremental insertion and deletion with batch ratio 0.01\%, and perform queries after half of the batches.

\hilberttree{s} and \mortontree{s} are at least xx$\times$ than the other fastest baselines in both \cosmo{} and \osm{}, except on the \geolife{}, where it is xx$\times$ slower than \ourorth{} on insertion and xx$\times$ slower on deletion.
The reason is that compared with the \cosmo{} and the \osm{}, the \geolife{} is more uniformly distributed, and the dataset is much smaller, where the \ourorth{} has the smallest tree height meanwhile keep good balance.
The \pkdtree{} is slower on updates in skewed datasets, especially on the \cosmo{}, where the dense clusters make the tree easily unbalanced and requires frequent rebalancing.
The \ourorth{} benefits from the small coordinate range ($[0, 10^6]$), where it can keep a small tree height and still achieve good query performance on \cosmo{} and \osm{}, while our previous experiments in \cref{xxx} has shown that the \ourorth{} is less ideal for skewed data such as the \varden{} with large data range ($[0, 10^9]$).
}

\subsection{Scalability\label{sec:exp-scalability}}
We evaluate the scalability of tested indexes in construction, insertion, and deletions, which is illustrated in \cref{fig:scalability}.
We use $10^9$ 2D points. A batch insertion uses a single batch of size $10^7$.
The data points in \cref{fig:scalability} show the speedup over the 1-core performance of \hilberttree{},
and therefore \cref{fig:scalability} also reflects the true efficiency comparison of each index (higher is better).
In general, all indexes scale well to 224 hyperthreads, and the difference mainly comes from the work (i.e., one-core performance).
Among them, \hilberttree{} has the best self-relative speedup, which is up to 82.9$\times$ in build and 80$\times$ in insertion.
This is likely due to its simple structure as a 1D search tree. Combining both low work and good scalability, \hilberttree{} has the best overall construction and update performance.
The scalability on batch deletion is similar to that of batch insertion, where the \hilberttree{} has the best scalability on 112 cores, which is 67$\times$ on \sweep{}, 37.4$\times$ on \varden{} and 68.4$\times$ on \uniform{}.
The \ourorth{} have good scalability on \uniform{}, but is slightly worse on \sweep{} and \varden{} due to the imbalanced tree.



\hide{
We evaluate the scalability of tested indexes in construction and insertion in \cref{fig:scalability} (deletion is shown in\iffullversion{~\cref{app:scalability}}\ifconference{ the full version paper~\cite{psiPaperFull}}).
We use $10^9$ 2D points. A batch insertion uses a single batch of size $10^7$.
The data points in \cref{fig:scalability} show the speedup over the 1-core performance of \hilberttree{},
and therefore \cref{fig:scalability} also reflects the true efficiency comparison of each index (higher is better).
In general, all indexes scale well to 224 hyperthreads, and the difference mainly comes from the work (i.e., one-core performance).
Among them, \hilberttree{} has the best self-relative speedup, which is up to 82.9$\times$ in build and 80$\times$ in insertion.
This is likely due to its simple structure as a 1D search tree. Combining both low work and good scalability, \hilberttree{} has the best overall construction and update performance.}


\subsection{Summary} \label{sec:exp:summary}
Combining all the experimental results, we visualize the tradeoff between update and query performance for all tested indexes in \cref{fig:summary} and provide a brief summary here.
A more detailed explanation is illustrated in \cref{table:related-work}.

\myparagraph{\pkdtree{s}.} 
\pkdtree{s} offer solid performance in queries, but can degrade on \ods{} queries.
Its update performance is reasonable but less competitive than \porth{s} and \cpamtree{s}.
Our results suggest they are best suited to scenarios with light to moderate update rates, high query throughput requirements, and predominantly in-distribution queries.

\myparagraph{\porth{s} (this paper).} \porth{s} generally give the best overall performance and trade-off between query and updates, especially non-skewed data.
It is best suited to scenarios with less skewed data, with any update-query ratio. It is also friendly to queries after high-volumes of updates, since the tree quality does not degrade with frequent updates.

\myparagraph{\cpamtree{s} (this paper).} \hilberttree{} performs slightly worse in updates than \mortontree{}, but significantly better in queries.
We would recommend \hilberttree{} as the default setting for \cpamtree{s}.
Compared to \pkdtree{s} and \porth{s}, \hilberttree{s} are less effective in queries, but significantly faster in construction and updates.
In general, \hilberttree{s} are best suited to highly dynamic scenarios where either updates requires very high throughput/low latency, or updates are much more frequent than queries.


\hide{
\subsubsection{Query Performance}

Column ``Query after Build'' in \cref{fig:heatmap} shows query times on a tree built from $5\times 10^8$ points.
Most indexes are nearly perfectly balanced after construction, and thus these results reflect their best-case (static) query performance.
In \cref{fig:heatmap}, we only select results for 10-NN query and a relatively large range query.
To give more details, in \cref{fig:knn} and \cref{fig:rr}, we further show how query performance changes with the output size, i.e., the parameter $k$ in \knn{},
and the range size in range-list queries.

\myparagraph{\knn{} Queries.}
As shown in \cref{fig:knn}, spatial-partitioning trees are evidently faster than R-trees in \knn{} queries, due to non-overlapping bounding boxes.
For \ourcpam{s}, while the \hilberttree{} is slightly slower than \mortontree{} in construction and updates, it is much more efficient in queries.
This is because Hilbert curve has better locality than Morton curve (adjacent codes are always geometrically close to each other).

Among the space-partitioning trees, \orthtree{s} has the best overall performance.
This is because when traversing the subtrees, the \ourorth{} can select 1 out of 4 quadrants,
which is more effective than \pkd{s} and \zd{s} that select 1 out of 2 half spaces.
Hence, \pkdtree{s} and \zdtree{s} are competitive but usually slower than \porth{s}.
The only exception is on \varden{} data.
For \ids{} queries, due to the skewed distribution of \varden{}, \orthtree{s} may be unbalanced,
and thus the comparison-based \pkdtree{s} perform better.
Interestingly, on the contrary, both \orthtree{s} exhibit an advantage on \ods{} queries on \varden{}.
The reason is still in imbalance---for \varden{}, points are highly clustered, making these regions in the tree deep and other regions shallow.
Since the \ods{} queries distribute differently from the input, they likely hit the shallow regions and thus are much faster.

\myparagraph{Range Queries.} As shown in both \cref{fig:heatmap,fig:rr},
\pkdtree{s} show a small but consistent advantage on range queries.
This is because a range query needs to visit all subtrees overlapping the query box.
In this case, \ourorth{s} have to explicitly check the bounding boxes for four subtrees,
while every non-overlapping checking on a \pkd{} node can prune half of the points in this subtree.
For other indexes, the relative performance on range queries are similar to \knn{} queries.
Interestingly, while \ourcpam{s} are still slower than \kdtree{s} and \porth{s} in range-list queries, the difference is much smaller, especially on large ranges---
in this case, the query time is mostly spent emitting the result list, hiding the difference in pruning effectiveness across indexes.
Therefore, compared with the \knn{} queries, range queries are less sensitive to the index type.


In summary, for queries on static trees,
\orth{s} and \kdtree{s} are naturally better than R-trees.
\kdtree{s} are better in dealing with \ids{} queries on non-uniform data, but may be worse in \ods{} queries.
\ourorth{} has the best or close to the best query performance in almost all queries and workloads.

\subsubsection{Queries on dynamic trees}
In addition to queries after a direct build,
we also run queries after 50\% of the batch insertion/deletions with batch size 0.01\% of the dataset.
These results reflect how the quality of each index is affected by updates.
The \orthtree{s} (\porth{} and \zdtree{}) are history-independent (modulo leaf-wrapping), namely,
the final state of the tree is not affected the order of operations applied to the tree.
Therefore, their performance is least affected by batch updates, making them perform best in queries in the dynamic setting.

For all other indexes, the tree 
may get less balanced after updates. Indeed, their performance all degrades to some extend compared to the static setting.
Similar to the static setting, \ourcpam{s} are less competitive in queries than \kdtree{s} and \orth{s},
which is common for R-tree structures due to overlapping bounding boxes.
However, compared to Boost R-tree, \ourcpam{s} achieve similar or better performance---in all queries,
\hilberttree{} is between 3.7x slower to 5.66x faster than Boost R-tree,
with a geometric mean of 2.5x faster.
}

\section{Conclusion\label{sec:conclusion}}

In this paper, we systematically study parallel spatial indexes, with a special focus on achieving high-performance updates in highly dynamic workloads. 
We proposed two new data structures: a parallel \orth{}, the \porth{}, and a parallel R-tree, the \cpamtree{} family. 
Both achieve superior update performance compared to existing parallel spatial indexes, while remaining competitive with or better than their counterparts in the literature for queries. 
We also highlight our comprehensive experiments to understand the performance of existing and our new parallel spatial indexes, and share our findings in \cref{sec:exp:summary} and \cref{fig:summary}. 

\section{Acknowledgements}
This work is supported by NSF grants CCF-2103483, TI-2346223 and IIS-2227669, NSF CAREER Awards CCF-2238358 and CCF-2339310, and the Google Research Scholar Program.


\ifconference{\clearpage}
\bibliographystyle{ACM-Reference-Format}
\balance
\bibliography{bib/strings, bib/main}

\ifconference{
  \appendix
  \section{Artifact Description}
\subsection{Abstract}
PSI (Parallel Spatial Indexes), also known the \ourlib{}, is a highly optimized \texttt{C++} library for parallel spatial indexing and querying on multi-core architectures.
It provides efficient implementations of three spatial index structures: the \kdtree{}, the \orthtree{}, and the \rtree{}, optimized for parallel processing.
\ourlib{} is designed to handle large-scale spatial datasets, enabling efficient batch operations (build/insertion/deletion) and fast spatial queries, such as nearest-neighbor and range searches.
The library provides both highly efficient implementation and strong theoretical guarantees, making it suitable for applications in geographic information systems (GIS), computer graphics, and spatial data analysis.

\subsection{Framework and Implementation}
The library is a header-only \texttt{C++} library, making it easy to integrate into existing projects.
All basic geometric primitives and fundamental tree operations are implemented in the \texttt{psi::BaseTree}, allowing for easy extension to support additional geometric types and operations.

The system implements three primary spatial tree types: \texttt{psi::KdTree}, \texttt{psi::OrthTree}, and \texttt{psi::PTree} plus two baseline implementations: \texttt{ZdTree} and \texttt{Boost-R-tree} for performance comparison.

All tree types support parallel construction via fork-join parallelism (ParlayLib), batch updates including \textsc{Build()}, \textsc{BatchInsert()}, \textsc{BatchDelete()}, and \textsc{BatchDiff()}, and spatial queries such as \textsc{KNN()}, \textsc{RangeQuery()}, and \textsc{RangeCount()}.

The correctness of the implementations is verified through extensive unit tests using a hand-crafted framework.
The library also shipped a parallel synthetic dataset generator to facilitate testing and benchmarking.

\subsection{Artifact Meta Information}
\begin{itemize}
  \item \textbf{Program}: \texttt{C++} library with parallisim supported by ParlayLib~\cite{blelloch2020parlaylib}.
  \item \textbf{Compilation}: \texttt{g++} 14.2.1 with \texttt{-O3} optimization flag.
  \item \textbf{Dataset}: Both the internally generated synthetic datasets and real-world datasets are used in the experiments.
  \item \textbf{Run-time environment}: Linux-based OS (e.g., Ubuntu 20.04).
  \item \textbf{Hardware}: Any x86-based multi-core machine with at least 512 GB RAM.
  \item \textbf{Output}: Performance metrics including construction time, batch update time, and query time.
  \item \textbf{Workflow}: Download the code and run the provided scripts to reproduce the experiments within a Docker container.
  \item \textbf{Publicly available}: Yes, the code is available at~\cite{psiCode}.
\end{itemize}

\subsection{Dependencies}
\myparagraph{Hardware.}
The experiments require a multi-core modern (2010+) x86-based multicore machine.
Relies on 128-bit CMPXCHG (requires -mcx16 compiler flag) but does not need hardware transactional memory (TSX).
The machine should have at least 512GB of RAM to handle large datasets used in the experiments.
We use a machine with 112 cores (224 hyperthreads), running Rocky Linux 8.10, with four Intel Xeon Platinum 8176 CPUs and 1.47 TB of RAM.

\myparagraph{Software.}
The code is written in \texttt{C++20} and can be compiled with \texttt{g++} version 14.2.1 or higher using the \texttt{-O3} optimization flag.
The external dependencies is included as a submodule in the repository.
The experiments are performed with \texttt{numactl} to enable better parallelism and \texttt{jemalloc} as the memory allocator to improve memory allocation performance.
\texttt{Docker} is recommended for quickly setting up the environment using the scripts in the repository.

\myparagraph{Datasets.}
The library shipped with a parallel synthetic dataset generator that can generate large-scale datasets with two distributions, \uniform{} and \varden{} (see \cref{sec:exp-synthetic} for more details about the distribution) and arbitrary dimensions.
Real-world datasets used in the experiments can be downloaded from public sources, with links provided in the documentation.

\subsection{Installation}
The code can be cloned from the public repository~\cite{psiCode}.
We provide a \texttt{Dockerfile} in the repository to quickly set up the environment.

\subsection{Experiments Workflow}
The artifact is easy to use as follows:
\begin{enumerate}
  \item Clone the repository from~\cite{psiCode}.
  \item Run the \texttt{./docker-run.sh pull} to build Docker image.
  \item Run the \texttt{./docker-run.sh full {-}{-}data-path [path]} to start a Docker container with the provided disk path for datasets, then all the experiments in the paper will run automatically.
  \item Check the results in the \texttt{script\_ae/plots} folder in the repository.
\end{enumerate}
All the dependencies are pre-installed in the Docker image, and the datasets will be generated or downloaded to the provided disk path.

If the \texttt{Docker} is not available, the code can also be compiled following the instructions in the \texttt{README.md}, and the experiments can be run using \texttt{script\_ae/run.sh}.

\subsection{Evaluation and Expected Results}
The running time for all operations (construction, batch update, and queries) will be printed to the standard output and saved in the \texttt{script\_ae/logs} folder.
All the plots in the paper will be generated and saved in the \texttt{script\_ae/plots} folder.
In our machine, execution times exhibit minimal variation across runs, with deviations of 1-3\%.
Note that the performance may vary on different hardware configurations, such as the number of cores. Our machine has 112 cores (224 hyperthreads), and we recommend using a machine with at least 64 cores to observe similar performance trends.

}

\iffullversion{
  \clearpage
  \appendix


\newcolumntype{L}[1]{>{\raggedright\arraybackslash\hspace{0pt}}m{#1}}


\clearpage
\begin{table*}[!t]
  \small
  \centering~\\~\\~\vspace{1em}
  \begin{tabular}{L{0.4cm}|L{6cm}|L{10cm}}
    \toprule
    & \multicolumn{1}{c|}{\textbf{General Features}}                                                                                                                                                                          & \multicolumn{1}{c}{\textbf{Existing Solutions and Their Features}}                                                                                                                                                                                                                                                                                   \\
    \midrule
    \begin{sideways}\textbf{\kdtree{}}
    \end{sideways}                         &
    \vspace{-.5cm}
    \begin{itemize}[noitemsep]
      \item[$+$] Linear space
      \item[$+$] Flexible for most queries (e.g., k-NN, range)
      \item[$+$] Non-overlapping bounding boxes (thus effective pruning in queries)
      \item[$+$] Generally fast queries across distributions
      \item[$+$] Comparison-based, resistant to skewed data
      \item[$+$] Easily generalizable beyond three dimensions
      \item[$-$] Slow/complicated updates
    \end{itemize}
    & \textbf{\pkdtree{}~\cite{men2025parallel}}
    \begin{itemize}[noitemsep]
      \item[$+$] I/O optimizations for construction and updates
      \item[$+$] Fast construction: $O(n\log n)$ work and polylogarithmic span
      \item[$+$] Among the fastest for queries in most tests, except for \ods{} queries on skewed distributions
      \item[$-$] $O(m\log^2 n)$ work for batch update of batch size $m$, unfriendly to workloads with frequent updates
    \end{itemize}
    \textbf{\bhltree{} and \logtree{}~\cite{yesantharao2021parallel}}
    \begin{itemize}[noitemsep]
      \item[$+$] Can leverage vEB layouts for query optimization (due to their static nature)
      \item[$+$] Construction with $O(n\log n)$ work and polylogarithmic span
      \item[$-$] Large batch-update cost: $O(m\log^2n)$ (\logtree{}) or $O((n+m) \log (n+m))$ (\bhltree{}, due to fully rebuild)
      \item[$-$] \logtree{} uses logarithmic method, leading to inefficient queries\vspace{-1em}
    \end{itemize}\\
    \midrule
    \vspace{1cm}
    \multirow{2}[1]{*}{
      \begin{sideways}\textbf{\orthtree{}}
    \end{sideways}} &
    \vspace{-.5cm}
    \begin{itemize}[noitemsep]
      \item[$+$] Linear space
      \item[$+$] Flexible for most queries (e.g., k-NN, range)
      \item[$+$] Non-overlapping bounding boxes (thus effective pruning in queries)
      \item[$+$] Fast queries, especially on non-skewed data
      \item[$+$] History-independent (modulo leaf wraps)
      \item[$+$] Simple/fast construction and updates, especially on non-skewed data
      \item[$-$] Sensitive to skewed data
      \item[$-$] Usually not generalizable beyond three dimensions
    \end{itemize}
    & \textbf{\ourorth{} (this paper)}
    \begin{itemize}[noitemsep]
      \item[$+$] I/O optimizations for construction and updates
      \item[$\star$] Fastest query performance on non-skewed data
      \item[$\star$] Usually faster updates than Pkd-trees, even on reasonably skewed data; slower than \cpamtree{s}
      \item[$+$] Fast construction: $O(n\log \Delta)$ work and polylogarithmic span
      \item[$\star$] Fast batch updates: $O(m\log \Delta)$ work and polylogarithmic span
      \item[$-$] Most affected by skewed data in construction, updates and queries; less efficient for \ids{} queries on skewed data
    \end{itemize}
    \textbf{\zdtree{}~\cite{blelloch2022parallel}}
    \begin{itemize}[noitemsep]
      \item[$+$] Relatively skew-resistant due to comparison sorting
      \item[$+$] Fast construction: $O(n\log n)$ work and polylogarithmic span
      \item[$+$] $O(m\log \Delta)$ work for batch update, where $\Delta$ is the aspect ratio
      \item[$-$] Generally slower updates/construction than the \porth{}
      \item[$-$] Integer coordinates and Morton curve only\vspace{-1em}
    \end{itemize}\\
    \midrule
    \vspace{1cm}
    \multirow{2}[4]{*}{
      \begin{sideways}
        \textbf{R-tree/BVH}
    \end{sideways}}    &
    \vspace{-.5cm}
    \begin{itemize}[noitemsep]
      \item[$+$] Linear Space
      \item[$+$] Flexible rules due to object-partitioning
      \item[$+$] Simple/fast construction and updates
      \item[$+$] Applicable to common queries (e.g., k-NN, range)
      \item[$+$] Easily generalizable beyond three dimensions
      \item[$-$] Overlapping bounding boxes (thus ineffective pruning in queries); usually slower queries than space-partitioning trees
    \end{itemize}
    & \textbf{\spactree{} (this paper)}
    \begin{itemize}[noitemsep]
      \item[$\star$] Compatible with Hilbert, Morton or other space-filling curves
      \item[$\star$] Embeds multi-dimensional data to 1D, enabling simple algorithm design and high parallelism (best self-speedup among tested indexes)
      \item[$+$] Fast construction: $O(n\log n)$ work and polylogarithmic span
      \item[$\star$] Super fast batch updates: $O(m\log n)$ work and polylogarithmic span
      \item[$+$] Comparison-based; robust to skewed data
      \item[$\star$] Fastest construction and update time among all baselines; significant advantage on updates
      \item[$-$] Integer coordinates only
      \item[$-$] Slow queries than space-partitioning trees due to overlapping bounding boxes
    \end{itemize}
    \textbf{Boost R-tree~\cite{schaling2011boost}}
    \begin{itemize}[noitemsep]
      \item[$+$] Supports multiple heuristics
      \item[$-$] No parallel construction or batch updates
      \item[$-$] Slow queries than space-partitioning trees due to overlapping bounding boxes\vspace{-1em}
    \end{itemize}
    \\
    \midrule
    \begin{sideways}\textbf{Range tree}
    \end{sideways}                        &
    \vspace{-.5cm}
    \begin{itemize}[noitemsep]
      \item[$+$] Worst-case work bound for range queries
      \item[$-$] $O(n\log n)$ space
      \item[$-$] Only supports range queries
      \item[$-$] Inefficient in more than two dimensions\vspace{-1em}
    \end{itemize}
    & \textbf{CPAM/PAM range tree~\cite{sun2018pam,dhulipala2022pac}}
    \begin{itemize}[noitemsep]
      \item[$+$] Parallel construction with $O(n\log n)$ work and polylogarithmic span
      \item[$-$] No simple support for parallel batch updates
    \end{itemize}    \\
    \bottomrule
  \end{tabular}%

  \caption{Summary of the main features of different spatial trees and existing solutions for parallel construction, updates, and queries. The symbol ``$\star$''marks our key technical contributions.
    In the bounds, $m$ is the batch size, $n$ is the index size, and $\Delta$ is the aspect ratio.
    \label{table:related-work}
  }
\end{table*}

\clearpage

\hide{
  \begin{table*}[]
    \small
    \centering
    \begin{tabular}{L{0.4cm}|L{6cm}|L{10cm}}
      \toprule
      & \multicolumn{1}{c|}{\textbf{General Features}}                                                                                                                                                                          & \multicolumn{1}{c}{\textbf{Existing solutions and their features}}                                                                                                                                                                                                                                                                                   \\
      \midrule
      \begin{sideways}\kdtree{}
      \end{sideways}                         & (+) Linear space\newline{}(+) Fast query\newline{}(+) Non-overlapping bounding boxes\newline{}(+) Purely comparison-based\newline{}(+) Flexible to most queries\newline{}(-) Slow/complicated update & \pkdtree{}\newline{}(+) I/O optimizations for construction and update. \newline{}(+) Construction with $O(n\log n)$ work and polylog span\newline{}(+) Fastest or close to fastest query performance in most tests, except for \ods{} query\newline{}(-) $O(m\log^2 n)$ work for batch update on batch size m\newline{}(-) Unfriendly to skewed data \\
      \midrule
      \multirow{2}[4]{*}{
        \begin{sideways}Oct/Quad tree
      \end{sideways}} & \multicolumn{1}{l|}{\multirow{2}[4]{*}{\makecell[l]{(+) Linear space                                                                                                                                                                                                                                                                                                                                                                                                                                                                                                           \\(+) Non-overlapping bounding boxes\\(+) Flexible to most queries\\(+) History-independent\\(+) Fast construction and update\\(-) Not resistant to skewed data}}} & \ourorth{} (this paper)\newline{}(+) I/O optimizations for update/construction\newline{}(++) Faster update than Pkdtrees, with fastest or close to fastest query performance on uniform data\newline{}(++) Optimal work, span and I/O for both tree construction and batch update\newline{}(-) Update and query performance may be slower than pkdtrees on skewed data \\
      \cmidrule{3-3}                                                  & \multicolumn{1}{l|}{}                                                                                                                                                                                                   & \zd{}\newline{}(+) Known extension to concurrent setting~\cite{xxx}\newline{}(-) Slower update/construction than p-orth tree\newline{}(-) Only work on integer keys                                                                                                                                                                                  \\
      \midrule
      \multirow{2}[4]{*}{
        \begin{sideways}R-tree/BVH
      \end{sideways}}    & \multicolumn{1}{l|}{\multirow{2}[4]{*}{\makecell[l]{(+) Flexible spatial partition rule                                                                                                                                                                                                                                                                                                                                                                                                                                                                                        \\(+) Linear space\\(+) Flexible to most queries\\(+) Easy and fast update\\(-) Overlapping of bounding boxes}}} & \boldmath{}\spactree{} (this paper)\newline{}(+) Compatible with hilbert, morton or other space filling curves\newline{}(+) Maps multi-dimensional data to 1D, allows for simple algorithm design\newline{}(+) Comparison-based, resistant to skewed data\newline{}(++) Fastest construction time and update time among all baselines\newline{}(++) $O(m\log n)$ batch update cost for batch size m and tree size n\newline{}(-) Slowest query performance due to overlapping bounding boxes\unboldmath{} \\
      \cmidrule{3-3}                                                  & \multicolumn{1}{l|}{}                                                                                                                                                                                                   & Boost R-tree\newline{}(+) Various heurestics supported\newline{}(-) No parallel construction or batch updates                                                                                                                                                                                                                                        \\
      \midrule
      \begin{sideways}Range-tree
      \end{sideways}                        & (+) Worst-case work bound for 2D range queries\newline{}(-) $O(n\log n)$ space\newline{}(-) Only support range queries\newline{}(-) Unfriendly to more than 2D                                                          & \boldmath{}CPAM/PAM range tree:\newline{}(+) Parallel construction with $O(n\log n)$ work and $O(\log^3 n)$ span\newline{}(-) No direct support for parallel batch updates\unboldmath{}                                                                                                                                                              \\
      \bottomrule
    \end{tabular}%

    \label{table:related-work}%
    \caption{A summary of the main features of different spatial trees, and existing solutions for parallel construction, update and query.}

  \end{table*}

}

\section{Analysis on \titlecap{\porth{s}}\label{app:porth-analysis}}
We now analyze the theoretical guarantees for our \ourorth{} construction and batch update algorithms.
Let $S\subseteq M$ a finite point set in the bounded Euclidean space $M$ and denote $B_p(r)\subseteq S$ the set of points enclosed by a ball with radius $r$ centered at $p$. Then $S$ has $(\rho,c)-$\bdita{expansion} if and only if $\forall p\in M$ and $r>0$:
\begin{equation}
  |B_p(r)|\geq \rho \implies |B_p(2r)|\leq c\cdot|B_p(r)|
\end{equation}
The constant $c$ is referred to \bdita{expansion rate} and $\rho$ is usually set to be $O(\log |S|)$. We say the expansion rate is \textit{low} if $c=O(1)$.
Intuitively, the low expansion property ensures the points distributed uniformly in the space.

Similarly, the \bdita{aspect ratio} $\Delta$ is defined as:
\begin{equation}
  \Delta = \frac{\max d(x,y)}{\min d(x,y)}\quad \forall x,y\in S
\end{equation}
and is said to be \textit{bounded} if $\Delta< n^c$ holds for some constant $c>0$.

We will now show that \ourorth{} with the assumption of bounded aspect ratio.
Without the assumptions, the tree height becomes $O(\log \Delta)$.
We can replace the tree heights in the following analysis to get \cref{thm:porth}.
\begin{lemma}\label{lem:orth_height}
  The height for \ourorth{} on points $P$ with size $n$ is $O(\log n)$, assuming the low expansion rate and the bounded aspect ratio for $P$.
\end{lemma}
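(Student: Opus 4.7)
The plan is to derive the height bound by a direct geometric packing argument that uses only the bounded aspect ratio; the low expansion rate is not needed for this lemma (it becomes necessary elsewhere in the analysis, e.g., for the \knn{} query cost). The core observation is that once an orth-tree cell has side length below the minimum pairwise distance in $P$, it can contain at most one point, so the base case $|P|<\phi$ of \cref{algo:orth-constr} terminates the recursion.

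First, I would fix the side length $L$ of the root's bounding region $H$ and note that, since each invocation of \cref{algo:orth-constr} halves every coordinate range via the tree skeleton, a cell at depth $k$ has side length $L/2^k$ and hence diameter $L\sqrt{D}/2^k$. Two distinct points $x,y$ in the same cell therefore satisfy $d(x,y)\le L\sqrt{D}/2^k$, so a cell containing more than one point forces $k\le \log_2\!\bigl(L\sqrt{D}/\min_{u\neq v\in P}d(u,v)\bigr)$. Next, I would bound $\min d$ from below using the aspect ratio: since all points lie inside $H$ the maximum pairwise distance is at most $L\sqrt{D}$, and combining with $\Delta=\max d/\min d$ yields $\min d\ge L\sqrt{D}/\Delta$. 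Substituting this back gives a depth bound of $O(\log \Delta)$ for any cell still containing more than one point, which collapses to $O(\log n)$ under the bounded aspect ratio assumption $\Delta=O(n^c)$.

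The main subtlety to verify is that the leaf-wrap constant $\phi>1$ does not inflate the bound: at the depth where each cell contains at most one point, $|P|<\phi$ holds trivially for any $\phi\ge 2$, so the leaf wrap contributes only an additive $O(1)$. The argument also implicitly assumes no duplicate points (so $\min d>0$), matching the remark following \cref{thm:porth} that $\log \Delta=\Omega(\log n)$ when $P$ has no duplicates in $\mathbb{R}^D$. I do not expect any essential obstacle here; the argument is a standard packing bound and the only care needed is in tracking dimension-dependent constants (factors of $\sqrt{D}$) and the leaf-wrap threshold.
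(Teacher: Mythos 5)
Your overall strategy --- counting how many halvings of the cell side length are needed before a cell can contain at most one point --- is essentially the same one the paper uses, but there is a genuine error at the key inequality, and it sits exactly where you dismiss the low-expansion assumption. From $\Delta = d_{\max}/d_{\min}$ together with $d_{\max}\le L\sqrt{D}$ you can only conclude $d_{\min} = d_{\max}/\Delta \le L\sqrt{D}/\Delta$, an \emph{upper} bound on $d_{\min}$; the lower bound $d_{\min}\ge L\sqrt{D}/\Delta$ that you then substitute into the depth bound does not follow. To lower-bound $d_{\min}$ in terms of $L$ you need a \emph{lower} bound on $d_{\max}$ of the form $d_{\max}\ge cL$, i.e., you need to know that the root region $H$ is not much larger than the diameter of the point set. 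Since $H$ is an input to \cref{algo:orth-constr} rather than necessarily the tight bounding box of $P$, this is not automatic: if all of $P$ sits in a tiny corner of a huge region $H$, the recursion spends $\Theta(\log(L/d_{\max}))$ levels shrinking the single nonempty cell before any two points are separated, and the height can exceed $O(\log\Delta)$ by an unbounded amount. This is precisely the role the paper assigns to the extra hypothesis: its proof begins by asserting that, by the low expansion rate, $H$ has side length at most a constant fraction of $d_{\max}$, after which counting the halvings from $L$ down to $d_{\min}$ gives $O(\log\Delta)=O(\log n)$ exactly as in your argument.

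The fix is small: either invoke the assumption relating $L$ to $d_{\max}$ as the paper does, or add the explicit hypothesis that $H$ is the tight axis-aligned bounding box of $P$, in which case each side length of $H$ is at most $d_{\max}$, hence $d_{\min}=d_{\max}/\Delta\ge L/\Delta$, and the depth of any cell with two or more points is $O(\log(\sqrt{D}\,\Delta))=O(\log\Delta)=O(\log n)$. The rest of your write-up --- the diameter $L\sqrt{D}/2^k$ of a depth-$k$ cell, the observation that the leaf wrap $\phi\ge 2$ contributes only an additive $O(1)$, and the no-duplicates caveat ensuring $d_{\min}>0$ --- is sound.
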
{}
\begin{proof}
  By the low expansion rate, the $H$ has side length at most a constant fraction of $d_{max}$, and the recursion stops when two points with distance $d_{min}$ are separated.
  Since $d_{max}/d_{min}=n^c$ by the bounded aspect ratio, and the splitters cut the $H$ in the spatial median, it takes $O(\log n)$ levels of splitters to reduce the side length of $H$ to $d_{min}$. The proof follows then.
\end{proof}

With the above lemma, we now show our \orth{} construction algorithm has $O(n\log n) $work, polylogarithmic span and $O(\mbox{Sort}(n))$ cache complexity on $n$ points.
Here wssume the cache size $M= \Omega(\text{polylog}(n))$ as in~\cite{blelloch2010low, BG2020}, by setting the skeleton height $\lambda=\epsilon \log(M)$ for $\epsilon<1/(2D)$, and chunk size $l=2^{D\cdot \lambda}$ in the sieving algorithm.
Denote $O(\mbox{Sort}(n)) = O(n/B\cdot\log_Mn)$ the optimal cache complexity for sorting~\cite{blelloch2010low}.

\begin{theorem}{}\label{thm:orth_constr}
  With parameters specified above, a \ourorth{} can be constructed on points $P$ with size $n$ in $O(n\log n)$ work, $O(\log^2 n)$ span and $O(\mbox{Sort}(n))$ cache complexity, assuming the low expansion rate and the bounded aspect ratio for $P$.
\end{theorem}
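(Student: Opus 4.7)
The plan is to decompose the cost of \cref{algo:orth-constr} into (a) the per-invocation cost of the skeleton construction together with the single call to $\textsc{Sieve}$, and (b) the cost summed across the recursive layers. Since the skeleton has height $\lambda = \epsilon \log M$, each recursive invocation descends $\lambda$ levels of the final tree, so by \cref{lem:orth_height} the total recursion depth in layers is $O(\log n / \lambda) = O(\log_M n)$. This is the backbone of the counting argument.

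First I would bound the cost of a single invocation on a subset $P'$ of size $n'$. Building the skeleton and computing the $2^{D\lambda}$ sub-regions is $O(2^{D\lambda}) = O(\sqrt{M})$ by the parameter choice $\lambda = (\log M)/(2D)$, which is negligible. The dominant step is $\textsc{Sieve}(P', \matht)$: by the analysis imported from~\cite{men2025parallel} (which we use as a black box, per the text), sieving $n'$ points into $k = 2^{D\lambda}$ buckets takes $O(n' \lambda)$ work (each point does $\lambda$ comparisons to locate its bucket), $O(\log n')$ span, and $O(n'/B)$ cache misses because $k \le \sqrt{M}$ keeps counters plus one cacheline per bucket resident in fast memory. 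Creating interior and leaf nodes and computing bounding boxes over the skeleton (line~\ref{line:mergeup}) is $O(n')$ work and $O(\log n')$ span, which is absorbed.

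Next I would sum across layers. At each recursion layer the sets in distinct subtrees are disjoint and their sizes sum to at most $n$, so work at one layer is $O(n \lambda)$, cache complexity at one layer is $O(n/B)$, and spans across layers compose additively along the recursion. Multiplying by the $O(\log n / \lambda)$ layers gives total work
\[
O\!\left(\frac{\log n}{\lambda}\right) \cdot O(n\lambda) = O(n \log n),
\]
total cache complexity
\[
O\!\left(\frac{\log n}{\lambda}\right) \cdot O(n/B) = O\!\left(\frac{n}{B}\log_M n\right) = O(\mathrm{Sort}(n)),
\]
and total span
\[
O\!\left(\frac{\log n}{\lambda}\right) \cdot O(\log n) = O\!\left(\frac{\log^2 n}{\log M}\right) = O(\log^2 n).
\]
The base case when $|P'| < \phi$ contributes only $O(\phi)$ extra work per leaf and $O(1)$ cache miss, which is within the stated bounds since there are $O(n/\phi)$ leaves.

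The main obstacle I expect is justifying the per-layer cache bound of $O(n/B)$: this relies on $\textsc{Sieve}$ actually behaving like one sweep of a parallel counting sort with $k \le \sqrt{M}$ buckets, so that both the input stream and the $k$ output streams can be written with constant amortized cache misses per $B$ points, and that the matrix-transpose based redistribution inside $\textsc{Sieve}$ stays within $O(n'/B)$ once $k^2 \le M$. This is exactly the regime established in~\cite{blelloch2010low, men2025parallel}, so we can cite it directly. A secondary subtlety is that the recursive call on a bucket must receive its points already in a contiguous slice so no extra I/O is incurred to refetch them---this is guaranteed by the in-place reordering performed by $\textsc{Sieve}$ and by passing the slice $R[i]$ on line~\ref{recursive}. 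Once these two points are verified, the three bounds follow immediately from the layer-by-layer summation above.
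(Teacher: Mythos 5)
Your proposal is correct and follows essentially the same route as the paper: bound the per-round cost of skeleton construction plus \textsc{Sieve} (with the $2^{D\lambda}\le\sqrt{M}$ bucket count keeping the sieve cache-resident), then multiply by the $O(\log n/\lambda)=O(\log_M n)$ rounds of recursion implied by the $O(\log n)$ tree height of \cref{lem:orth_height}. The only cosmetic difference is that the paper charges $O(\lambda\log n)$ span per round for the (theoretically parallelized) skeleton construction rather than your $O(\log n)$, but both yield the stated $O(\log^2 n)$ total span.
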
{}
\begin{proof}
  Every point is processed at most once in each round, except for the points sieving, where finding the bucket for one point takes $O(\skheight\cdot D)$ work. The algorithm terminates after $O(\log n)/\skheight$ rounds of recursion, which implies $O(\lambda\cdot D)\cdot O(\log n)/\lambda=O(\log n)$ total work per point. Therefore, the total work is $O(n\log n)$.

  For the span, practically the tree skeleton construction and processing each block is done sequentially.
  However, theoretically, they can be parallelized in $O(\lambda\log n)$ and $O(\log n)$ span, respectively~\cite{blelloch2010low}.
  All other operations takes $O(\log n)$ span.
  In total, the span in each round is $O(\lambda\log n)$. The algorithm has $O(\log n)/\skheight$ rounds of recursion, so the overall span is $O(\log^2 n)$.

  Now consider the cache complexity. Both building the tree skeleton and sub-regions computation fully fit in cache. The chunk size $l=2^{\lambda\cdot D} = M^{\epsilon\cdot D}\leq\sqrt{M}$, which implies that each chunk fully fits in cache.
  Therefore, the sieving algorithm takes $O(n/B)$ block transfers.
  All other operations take $O(n/B)$ block transfers, in total $O(n/B\cdot \log n/\lambda)=O(n/B\cdot\log_M n)$ I/Os.
\end{proof}

For batch updates, we assumes the batch size $m=O(n)$, and if $m=\omega(n)$, we simply replace $n$ with $m+n$ in below bounds for insertions, and there is no change for deletions.
\begin{theorem}{}
  The Update (insertion or deletion) of a batch of size $m=O(n)$ on a \ourorth{} of size $n$ can be performed in optimal $O(m\log n)$ work, $O(\log^2 n)$ span, and $O(m(\log (n/m)+(1+\log_M m)/B))$ cache complexity, assuming the low expansion rate and the bounded aspect ratio for the updated points.
\end{theorem}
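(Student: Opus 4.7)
The tree height is $O(\log n)$ by Lemma~\ref{lem:orth_height}, and the update algorithm in Algorithm~\ref{algo:orth_insert} follows the same ``fetch skeleton, sieve, recurse'' pattern as construction (Algorithm~\ref{algo:orth-constr}), with the batch $P$ of size $m$ traversing an existing tree. I would therefore mirror the three analyses from Theorem~\ref{thm:orth_constr}, adapted to the batch size.

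For the \emph{work}, I would show that each of the $m$ updated points contributes $O(\lambda D)$ work per recursive round to be sieved into a skeleton bucket, and there are $O((\log n)/\lambda)$ rounds before reaching a leaf, giving $O(\log n)$ work per point and $O(m\log n)$ in total. Leaf-level rebuilds (line~\ref{rebuildleaf} of Algorithm~\ref{algo:orth_insert}) and bounding-box updates are amortized against this, since the sum of batch-affected leaf sizes is $O(m)$. For the \emph{span}, each recursive round contributes $O(\lambda\log m)$ span (dominated by \textsc{Sieve}), and the $O((\log n)/\lambda)$ nested rounds combine to $O(\log m\log n) = O(\log^2 n)$.

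The \emph{cache complexity} is the main obstacle and requires a two-phase decomposition keyed to the batch size at each sub-recursion. In the \emph{upper phase}, at tree levels where the subproblem's local batch $m_v$ satisfies $m_v > m$ (informally, the upper $O(\log(n/m))$ tree levels from the root), the destination buckets are thinly populated so block transfers cannot be amortized across cachelines; since $2^{\lambda D}\le\sqrt{M}$, the skeleton and per-bucket metadata fit in cache, and each of the $m$ points incurs $O(1)$ cache misses per tree level, summing to $O(m\log(n/m))$. In the \emph{lower phase}, each surviving sub-recursion has batch size $m_v \le m$ and behaves like a construction on $m_v$ points; applying Theorem~\ref{thm:orth_constr} to each gives cost $O(\mbox{Sort}(m_v))$, and since the $m_v$ partition $m$, concavity of $x\mapsto (x/B)\log_M x$ yields a total of $O(\mbox{Sort}(m)) = O(m(1+\log_M m)/B)$. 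Summing both phases gives the claimed bound.

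The hard part will be justifying the upper-phase per-level $O(1)$-per-point charge rigorously, in particular bounding the cost of reading the existing tree structure itself (fetching skeletons and updating bounding boxes) rather than only the batch points. I plan to use that across the upper phase the total number of touched tree nodes telescopes: since consecutive skeleton levels cover geometrically shrinking subtrees, the total node count visited is $O(m)$ nodes, contributing only $O(m/B + m\log(n/m))$ transfers, which is absorbed into the stated bound. The lower-phase summation is a standard convexity argument and should be routine once the phase boundary is pinned down.
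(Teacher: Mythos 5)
Your work and span analyses are fine and match the paper's in substance (the paper argues the work bound via the observation that the post-update tree equals the from-scratch tree of height $O(\log(m+n))=O(\log n)$, so each of the $m$ points does $O(\log n)$ work to reach its leaf; your per-round accounting of $O(\lambda D)$ work over $O((\log n)/\lambda)$ rounds is the same thing). The problem is the cache analysis, where your two-phase decomposition is mis-specified in a way that does not recover the claimed bound. First, the phase condition ``$m_v>m$'' is vacuous: a sub-recursion's local batch never exceeds $m$. Reading it charitably as ``subtree size $n_v>m$,'' the density claim is then backwards: near the root the batch is at its \emph{densest} relative to the tree nodes (many batch points share each node, and the sieve there is the cache-efficient, sort-like part), whereas the batch becomes thinly spread only in the \emph{lower} levels, after the $m$ root-to-leaf paths have diverged. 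Second, and more seriously, your lower-phase charge of $O(\mbox{Sort}(m_v))$ per surviving sub-recursion, justified by saying it ``behaves like a construction on $m_v$ points,'' ignores that an insertion of $m_v$ points into an existing subtree of size $n_v$ must still traverse $\Theta(\log n_v)$ levels of that subtree, not $\Theta(\log m_v)$. With your phase boundary at $n_v\approx m$, the worst case is $m$ sub-recursions each with $m_v=1$ descending $\Theta(\log m)$ levels, for $\Theta(m\log m)$ cache misses; when $m=\Theta(n)$ this is not bounded by the claimed $O(m\log(n/m)+m(1+\log_M m)/B)$, since then $\log(n/m)=O(1)$.

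The fix is to decompose by \emph{cost source} rather than by depth, which is what the paper does: (i) the cost of sorting and moving the batch points themselves, which amortizes across all levels like a sample/counting sort and totals $O(m(1+\log_M m)/B)=O(\mbox{Sort}(m))$; and (ii) the cost of touching nodes of the original tree, which is $O(1)$ misses per node in the union of the $m$ root-to-leaf paths, and that union has $O(m(1+\log(n/m)))$ nodes by the standard bound $\sum_d \min(2^{Dd},m)$ (the paper cites the join-based-trees result for this). Your closing remark about a telescoping count of touched tree nodes is pointing in the right direction --- that path-union bound is exactly the missing ingredient --- but it needs to replace, not supplement, the per-level $O(1)$-per-point charge and the per-sub-recursion $O(\mbox{Sort}(m_v))$ charge.
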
{}
\begin{proof}
  We take the insertion as an example, the deletion is similar.
  For the work, note the tree after updates is same as the one built from scratch on all points, which has height $O(\log (m+n))$ by Lem. \ref{lem:orth_height}. The height difference is $O(\log (m+n))-O(\log n)=O(1)$. Since each leaf wraps $O(1)$ points, and every point needs $O(\log n)$ work to reach the leaf, the total work is $O(m\log n)$.

  The analysis for span is the same as for construction in \cref{thm:orth_constr}.

  The cache bound for updates has two parts.
  The first is sorting within the batch.  This part has $O(m(1+\log_M m)/B)$ cache complexity.
  The second part is accessing the tree nodes in the original \porth{}.
  Finding $m$ leaves in a tree of size $n$ will touch $O(m\log(n/m))$ tree nodes~\cite{blelloch2016just}.
  Putting both cost together gives the stated cache complexity.
\end{proof}

Replacing all the tree height $O(\log n)$ with $O(\log \Delta)$ gives \cref{thm:porth}, without the assumption of bounded aspect ratio.

\section{Analysis on \titlecap{\spactree{s}}}\label{app:spac-analysis}

\subsection{Correctness}

We prove the correctness of the update algorithms for \spactree{s} by showing its equivalence to that of the \pactree{}.
Here we discuss the insertion algorithm, and deletion can be shown similarly.
First, the \cref{algo:ptree-constr} constructs the same tree as the \pactree{},
so the tree returned in line~\ref{line:insert-base-case} and line~\ref{line:insert-rebuild} remains the same.
The split key in line~\ref{line:insert-getkey} is same for both trees, therefore the \spactree{} will insert same points in leaves as the \pactree{} in line~\ref{line:cpam_insert_append}, but keep the points unsorted.
The \join{} and \rightJoin{} operations (line~\ref{line:insert-join} and line~\ref{line:right-join}) are identical for both tree. In this case, the tree split will reach the same leaves in both trees, and line~\ref{line:reorder-in-expose} and line~\ref{line:sort-in-node} ensure the points order in \spactree{} to be identical to those in \pactree{} before further proceeding.
The other operations in \expose{} and \node{} remains the same, and the correctness follows.

\subsection{Cost Analysis}

\begin{theorem}{}\label{thm:spac_constr_app}
  A \spactree{} with $n$ points can be constructed in $O(n\log n)$ work, $O(\log n)$ span, and $O(\mbox{Sort}(n))$ cache complexity.
\end{theorem}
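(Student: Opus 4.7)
The plan is to decompose the cost of \cpamBuild{} into two phases: (i) the modified sample sort \sortSfc{} that computes SFC codes inline and sorts the $\ip{code}{id}$ pairs, and (ii) the divide-and-conquer tree assembly \cpamBuildFromSorted{} that turns a sorted array into a balanced BST with leaf wrapping. I would bound each phase separately and then add the bounds.

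For the sorting phase, I would observe that \sortSfc{} is structurally a standard low-depth parallel sample sort with one modification: on the very first pass over $P$ it computes the $O(1)$-cost SFC code of each point while writing into $A$, and from then on it sorts only the $\ip{code}{id}$ pairs. Since SFC codes cost $O(1)$ per point (as stated in \cref{sec:prelim-spatial}) and the pairs have $O(1)$ size, the modification does not change the asymptotics. I would therefore invoke the sample-sort analysis of Blelloch et al.\ (cited as \cite{blelloch2010low,axtmann2017place} in the algorithm) to conclude work $O(n\log n)$, span $O(\log n)$, and cache complexity $O(\mathrm{Sort}(n))=O((n/B)\log_M n)$.

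For the assembly phase, note that \cpamBuildFromSorted{} bisects the sorted slice, recurses in parallel, and stops at a leaf once $|P|\le\phi$. The recursion tree therefore has $O(n/\phi)=O(n)$ leaves and $O(n)$ internal nodes. At each internal node the only work is to pick the pivot and compute the bounding box from the two children in $O(1)$, giving the work recurrence $W(n)=2W(n/2)+O(1)$ with $W(\phi)=O(\phi)$, hence $W(n)=O(n)$. The span recurrence $S(n)=S(n/2)+O(1)$ yields $O(\log n)$. Because every recursive call operates on a contiguous subrange of the already-sorted $A$ and the corresponding contiguous subrange of $P$ (accessed only at leaves through the ids), the recursion becomes cache-oblivious with cost $O(n/B+1)$ once the working set fits in cache; standard analysis then gives $O(n/B)$ overall. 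Summing the two phases gives the stated bounds, since $O(\mathrm{Sort}(n))$ dominates $O(n/B)$.

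The main obstacle is the span bound of $O(\log n)$ for the sort: a naive recursive sample sort only yields $O(\log^2 n)$ span, so I would have to be careful to appeal to (or re-derive in the appendix) a low-depth parallel integer sort on the $\ip{code}{id}$ pairs (the codes are bounded-width integers when coordinates are 64-bit integers), which is the form used in \cite{blelloch2010low}. A secondary subtlety is that retrieving points by id at the base case (\cref{line:base-case}) is a random read into $P$; I would argue this is absorbed by $O(\mathrm{Sort}(n))$ since each point is touched $O(1)$ times across all base cases and the total random-access cost is $O(n/B \cdot \log_M n)$ when combined with the final redistribution of \cref{line:orth-build-redistribute} via matrix transpose.
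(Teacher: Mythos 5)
Your proposal follows essentially the same route as the paper's proof: it bounds the modified sample sort by appealing to the standard sample-sort analysis (arguing the inline SFC computation and the $\ip{code}{id}$ representation add only $O(1)$ cost per point), and bounds \cpamBuildFromSorted{} as a balanced parallel divide-and-conquer with $O(n)$ work, $O(\log n)$ span, and $O(n/B)$ cache complexity, then sums the two phases. If anything, your treatment is more careful than the paper's two-line argument: you explicitly flag that a na\"ive recursive sample sort gives only $O(\log^2 n)$ span and that one must appeal to a low-depth (integer) sort to reach $O(\log n)$, and you account for the random reads when retrieving points by id at the leaves---both points the paper's proof glosses over.
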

\begin{proof}
  The \sortSfc{} in\cref{algo:ptree-constr} is a simple modification of the sample-sort algorithm~\cite{blelloch2010low, axtmann2017place}---all extra operations (i.e., computing the SFC code and storing the point id) take no additional asymptotic cost.
  The \cpamBuildFromSorted{} in \cref{algo:ptree-constr} is a parallel divide-and-conquer algorithm that takes $O(n)$ work, $O(\log n)$ span and $O(n/B)$ cache complexity.
  The proof then follows.
\end{proof}

In the following proof we assume the batch size $m=O(n)$.
Note that the following proof depends on the analysis of \pactree{}, which can be found at~\cite{dhulipala2022pac}.
\begin{theorem}
  A batch update (insertion or deletion) of size $m$ on a \spactree{} of size $n$ uses $O(m\log n)$ work and $O(\log^2 n)$ span.
\end{theorem}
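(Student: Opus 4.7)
The plan is to reduce the analysis to the known join-based batch-update analysis of \pactree{s}~\cite{dhulipala2022pac}, treating the SFC code of each point as its BST key, and then to account separately for the small amount of extra work introduced by (i) the initial SFC sort of the batch and (ii) the lazy sorting of leaves that were left marked unsorted. The correctness argument already established above gives a structural equivalence between the \spactree{} returned by \cref{algo:cpaminsert} and the corresponding \pactree{}, which will let me import the \pactree{} bounds essentially verbatim.

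First I would decompose the cost of \cpamBatchInsert{} into three contributions. The initial call to compute SFC codes and sort the batch $P$ is a minor modification of parallel sample-sort~\cite{blelloch2010low,axtmann2017place}, contributing $O(m\log m)=O(m\log n)$ work and $O(\log^2 m)=O(\log^2 n)$ span, by the same argument used in \cref{thm:spac_constr_app}. After sorting, the recursive function \cpamBatchInsertFromSorted{} has exactly the divide-and-conquer shape of a standard join-based batch insert: at each internal node it binary searches the sorted batch on the pivot's code (line~\ref{line:insert-binary-search}), splits the batch between the two children, recurses in parallel, and rebuilds via \join{} (line~\ref{line:insert-join}). Importing the \pactree{} analysis, this phase contributes $O(m\log n)$ work and $O(\log^2 n)$ span, including the amortized cost of all rebalancing \rightJoin{}/\leftJoin{} along the spines.

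Next I would argue that the extra sorting triggered inside \expose{} (line~\ref{line:reorder-in-expose}) and \node{} (line~\ref{line:sort-in-node}) is absorbed into these bounds. Because the leaf wrapping threshold satisfies $\leafwrap=O(1)$, each lazy leaf sort costs $O(1)$ work and $O(1)$ span. Such a sort is invoked only when a leaf is being expanded on a join spine or rebuilt/redistributed by \node{}, so the number of lazy sorts is at most the number of leaves touched by the batch update, which is $O(m)$ in a balanced tree. The total additional work is therefore $O(m)$, and along any single root-to-leaf path at most $O(\log n)$ such sorts can fire, adding $O(\log n)$ to the parallel span; both are dominated by the bounds above.

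The main obstacle will be justifying that the relaxed leaf order does not interact badly with the weight-balance invariant on which the \pactree{} cost bounds depend. The key observation I would lean on is that an unsorted leaf is materialized into the same sorted configuration that a \pactree{} would hold exactly at the moment a balance decision could depend on it, namely during \expose{} (when it is about to be split on a join spine) or inside \node{} (when two subtrees are being merged or redistributed). Combined with the equivalence-to-\pactree{} argument preceding this theorem, this guarantees that every rebalancing step operates on a tree indistinguishable from its \pactree{} counterpart, so the height remains $O(\log n)$ throughout the update and the imported bounds are valid. The deletion case is symmetric: the only per-leaf difference is that points are removed rather than appended, which remains $O(1)$ per leaf at constant $\leafwrap$, so the same decomposition and bounds carry through.
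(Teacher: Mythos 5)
Your proposal is correct and follows essentially the same route as the paper's proof: both reduce the argument to the join-based batch-update analysis of \pactree{s}, charge $O(m\log m)$ work for the initial SFC sort of the batch, and observe that expanding or lazily re-sorting a leaf costs only $O(1)$ because the leaf-wrap threshold $\leafwrap$ is a constant, so these extra costs are absorbed into the $O(m\log n)$ work and $O(\log^2 n)$ span bounds. Your additional remarks on why the relaxed leaf order does not disturb the weight-balance invariant correspond to the paper's separate correctness argument (equivalence to the \pactree{}), on which its cost analysis implicitly relies as well.
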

\begin{proof}
  We first show the span bound. The sorting takes $O(\log m)$ span, and the following points insertion/deletion takes $O(\log n)$ rounds to reach leaves.
  Expanding the leaf and restore the points order take constant time. Both the \join{} and \rightJoin{} take $O(\log m)$ span~\cite{dhulipala2022pac} in each round. In total, the entire process has $O(\log m\log n)$ span.

  We now show the work bound.
  Sorting $m$ points takes $O(m\log m)$ work, and each point takes $O(1)$ operation in each round. The leaf expansion takes constant time. The work by \join{} is asymptotically bounded by the work of \rightJoin{}~\cite{dhulipala2022pac}, and the total work of \rightJoin{} is $O(m\log \frac{n}{m})$. The total work therefore is $O(m\log n)$.
\end{proof}

Combining both lemmas gives \cref{thm:spac}.

\section{Implementation Details}\label{app:impl}
\myparagraph{\ourorth{}.}
We choose to build $\skheight=3$ levels for 2D points and $\skheight=2$ levels for 3D points in the \ourorth{} skeleton, which provides generally good performance on our machine.
For each bounding box, we store only the point coordinates (with no extra metadata) of the lower-left and upper-right corners to save memory.
A single \knn{} query traverse subtrees in increasing order of their minimum distance to the query point, computed by comparing the query point with the bounding box associated with each subtree.

\myparagraph{\spactree{}.}
The implementation of \spactree{} builds on the code of \pactree{}~\cite{dhulipala2022pac}, but with a careful redesign to optimize performance.
Simply treating \pactree{} as a black box introduces overhead from transforming input points into key-value pairs---using the SFC code as the key and the entire point as the value---as suggested in its user manual.
We avoid this by redesigning the interface so that \spactree{} automatically parses the SFC code in each point as the key and treats the remaining attributes as the value.
This allows it to operate directly on the input sequence, reducing preprocessing time and memory usage.

We also introduce a heuristic to optimize batch updates when a leaf node overflows.
The original approach in \cref{algo:cpaminsert} unconditionally rebuilds the parent subtree by invoking \cref{algo:ptree-constr}.
This can be inefficient when many points are affected, because the insertion batch must be merged with the points in the leaves prior to recursive node allocation, even when the batch is already sorted.
Hence, it incurs significant overhead.
An alternative is to explicitly expose the leaf as a balanced tree with empty leaves and then perform the batch insertion on that subtree.
Our method chooses between these strategies via a threshold.
If the combined size of the overflowing leaf and the new batch is below a threshold (in our case, $4\leafwrap$), we perform a standard, localized rebuild. Otherwise, we expose the leaf and perform batch insertion on the exposed subtree.

\hide{
  We also use a heuristics approach to optimize batch updates when a leaf node overflows. The original approach in~\cref{algo:cpaminsert} would be to unconditionally rebuild the parent subtree by invoking \cref{algo:ptree-constr}.
  However, this is inefficient if the number of affected points is large, since the insertion batch has to be merged with leaf points before the recursive node allocation, even if the batch is already in the sorted order, which incurs large overhead.
  A more efficient approach is to explicitly expose the leaf into a balanced tree with empty leaves, and perform the batch insertion on this new subtree.
  Our method instead uses a threshold to decide the appropriate action. If the combined size of the points in the overflowing leaf and the new batch is below a threshold (in our case, $4 \leafwrap$), we perform a standard, localized rebuild.
Otherwise, we expose the leaf and perform the batch insertion on the exposed subtree.}

\myparagraph{Parameter Choosing.}
We empirically set the parameters to achieve the best performance on our machine for all implementations.
We set the leaf wrap to 40 for both \ourcpam{} and \cpam{}, and 32 for all other baselines.
Both \ourcpam{} and \cpam{} use the weight-balanced scheme with balancing parameter $\alpha=0.2$, i.e., the weights of left and right sub-tree can be differ by at most 20\%.
For \pkd{}, we adopt $\alpha=0.3$ as suggested in their paper.

\begin{figure*}[t]
  \centering\vspace{-0.5em}
  \includegraphics[width=\textwidth]{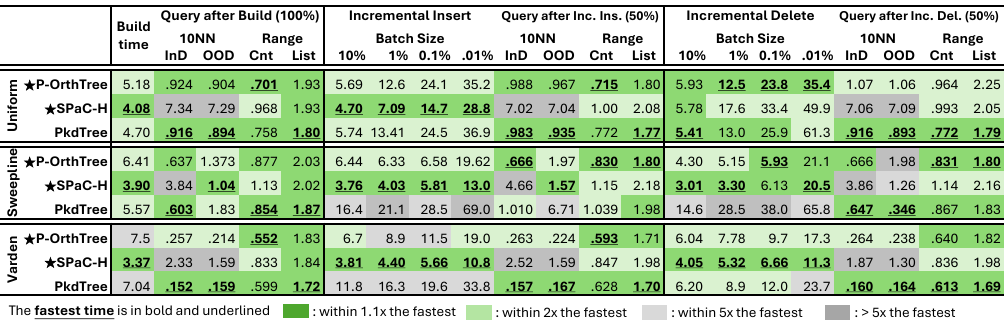}
  \caption{{\bf Running time (in seconds) on 3-dimensional synthetic data. Lower is better.} \normalfont
    The fastest time in each test is in bold and underlined.
    We use colors to mark results within 1.1$\times$, 2$\times$, 5$\times$, and $>5\times$ the fastest time.
    Detailed settings for build, queries, and incremental insertion/deletion are introduced at the beginning of \cref{sec:exp-synthetic}.
    \ids{}/\ods: in-/out-of-distribution.
  }\label{fig:heatmap-3d}
  \vspace{-0.7em}
\end{figure*}

\section{Batch Updates}\label{app:batch-updates}
We now provide more experimental results for single batch updates.
We evaluate the performance of batch updates by varying the batch size
from $10^5$ to $10^9$ points, with results presented in~\cref{fig:batch-update}. The
experimental setup consists of an initial tree constructed with $10^9$
points. We then perform two separate operations: a batch insertion, which
adds new points drawn from the same distribution, and a batch deletion,
which removes an equivalent number of existing points from the tree.
Smaller batch sizes were omitted from this analysis, as their low
computational cost diminishes the practical benefits of parallelism.

All baselines scale well on both single batch insertions and deletions.
The \hilberttree{} is faster than others on all benchmarks, except for the batch deletion on \uniform{}, where it is slightly slower than the \ourorth{} due to the handling of imbalance.
The \ourorth{} is slower than the \hilberttree{} on batch insertions on \varden{}, since it is skewed on highly clustering data, and the tree traversal time incurs more overhead.
The \pkdtree{} is generally slower than \hilberttree{} on skewed datasets such as \sweep{} and \varden{}, since its reconstruction-based balancing scheme is more expensive when the rebuilt sub-tree is large, which is typical on skewed datasets.

\begin{figure}[t]
  \includegraphics[width=0.48\textwidth]{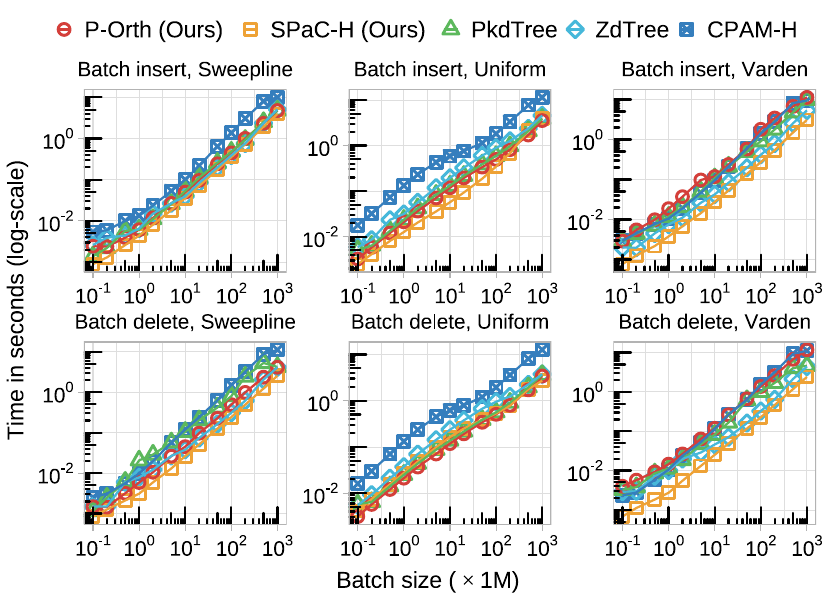}
  \caption{\textbf{
    Running time for batch updates on points from synthetic datasets on a tree with 1000M points in 2 dimensions.  Lower is better.}
    The batch insertion is to insert another batch from same distribution, and the batch deletion is to delete a batch from the existing points.
    The batch size is the number of points in the batch ($\times$ 1M).
    Plots are in log-log scale.
  }
  \label{fig:batch-update}
\end{figure}

\section{Performance on 3D Synthetic Datasets}\label{app:3d-syn}

We now provide more experimental results on 3D synthetic datasets.
We generate 3D synthetic datasets with the same method as in 2D, but limit the coordinates range within $[0, 10^{6}]$ to make it compatible to the \hilberttree{}.
The experiments set up is the same as in 2D~\cref{sec:exp}.
The results are shown in \cref{fig:heatmap-3d}.
We omit other baselines since they have been shown to be slower as in 2D case~\cref{fig:heatmap} .

For tree construction, \hilberttree{s} remain the fastest ones, and the time is more close to the 2D case compared with \ourorth{s} and \pkdtree{s},
since the SFC-based indexes are less sensitive to the dimensionality.
As a result, \hilberttree{s} are 1.3--2.2$\times$ faster than \ourorth{s} and 1.2--2.1$\times$ faster than \pkdtree{s}.

Regarding the batch updates, the \hilberttree{} remains the fastest one on most of the benchmarks, except on \uniform{} where it is slightly slower than the \ourorth{} due to handling of imbalance.
\ourorth{s} are faster than \pkdtree{s} on all benchmarks.
The reasons are 1) the \ourorth{} does not need to handle the imbalance, and 2) the range of coordinates is limited to $[0, 10^6]$, which enables the tree height of \ourorth{s} become much smaller than it in 2D cases (the data range is $[0, 10^9]$), so that the tree traversal time is much reduced in the skewed data such as \sweep{} and \varden{}.
However, \pkdtree{s} still keep the advantage on queries---the fastest one on most of the benchmarks, and competitive on the rest.

}
\end{document}
\endinput